\def\@maketitle{%
  \newpage
  \null
  \begin{center}%
  \let \footnote \thanks
    {\LARGE\bf \@title \par}
    \vskip 2em
    {\large
      \lineskip .5em%
      \begin{tabular}[t]{c}%
        \@author
      \end{tabular}}%
    \vskip .3em
  \end{center}%
  \par}
\renewenvironment{abstract}{\begin{list}{}
			{\rightmargin\leftmargin
			\listparindent 1.5em
			\parsep 0pt plus 1pt}
			\small\item}{\end{list}}
\newtheorem{defi}{Definition}
\newtheorem{theo}{Theorem}
\newtheorem{prop}{Proposition}
\newtheorem{lemm}{Lemma}
\newtheorem{exam}{Example}
\newtheorem{conj}{Conjecture}
\newtheorem{obs}{Observation}
\newenvironment{definition}[1]{\vspace{-1ex}\begin{defi} \rm \label{#1} }{\vspace{-1ex}\end{defi}}
\newenvironment{theorem}[1]{\vspace{-1ex}\begin{theo} \rm \label{#1} }{\vspace{-1ex}\end{theo}}
\newenvironment{proposition}[1]{\vspace{-1ex}\begin{prop} \rm \label{#1} }{\vspace{-1ex}\end{prop}}
\newenvironment{example}{\vspace{-1ex}\begin{exam} \rm }{\vspace{-1ex}\end{exam}}
\newenvironment{conjecture}[1]{\vspace{-1ex}\begin{conj} \rm \label{#1} }{\vspace{-1ex}\end{conj}}
\newenvironment{observation}[1]{\vspace{-1ex}\begin{obs} \rm \label{#1} }{\vspace{-1ex}\end{obs}}
\newenvironment{itemise}{\begin{list}{$-$}{\leftmargin 18pt
                        \labelwidth\leftmargini\advance\labelwidth-\labelsep
                        \topsep -8pt \itemsep -4pt \parsep 4pt}}{\end{list}}
\newenvironment{definitionG}[2]{\vspace{-1ex}\begin{defi} \rm \label{#1} \mbox{#2}
\begin{itemise} \item[]}{\vspace{-3pt}\end{itemise}\end{defi}}
\newenvironment{theoremG}[2]{\vspace{-1ex}\begin{theo} \rm \label{#1} \mbox{#2}
\begin{itemise} \item[]}{\vspace{-1ex}\end{itemise}\end{theo}}
\newenvironment{lemmaG}[2]{\vspace{-1ex}\begin{lemm} \rm \label{#1} \mbox{#2}
\begin{itemise} \item[]}{\vspace{-1ex}\end{itemise}\end{lemm}}
\newenvironment{lemmai}[2]{\vspace{-1ex}\begin{lemm} \rm \label{lem-#1} \mbox{#2}
\begin{enumerate} \itemsep -5pt \vspace{-1em}}{\vspace{-1ex}\end{enumerate}\end{lemm}}
\newcommand{\refdf}[1]{Definition~\ref{df-#1}}
\newcommand{\refpr}[1]{Proposition~\ref{pr-#1}}
\newcommand{\refthm}[1]{Theorem~\ref{thm-#1}}
\newcommand{\reflem}[1]{Lemma~\ref{lem-#1}}
\newcommand{\reffig}[1]{Figure~\ref{fig-#1}}
\newenvironment{proof}{\begin{trivlist} \item[\hspace{\labelsep}\bf
    Proof\ \ ]}{\vspace{-1ex}\hfill$\boxempty$\end{trivlist}}
\def\text#1{\textrm{#1}}
\def\precond#1{{}^\bullet #1}
\def\postcond#1{{#1}^\bullet}
\def\iprecond#1{{}^\circ#1}
\def\ipostcond#1{{#1}^\circ}
\def\Production#1{\stackrel{#1}{\Longrightarrow}}
\def\production#1{\stackrel{#1}{\longrightarrow}}
\def\equivalent{\Leftrightarrow}
\newfont{\fsc}{eusm10 scaled 1100}      
\def\powerset#1{\mbox{\fsc P}(#1)}
\def\powermultiset#1{\IN^{#1}}
\def\implies{\Rightarrow}
\def\equivalent{\Leftrightarrow}
\def\mathrlap{\mathpalette\mathrlapinternal}
\def\mathrlapinternal#1#2{%
  \rlap{$\mathsurround=0pt#1{#2}$}}
\def\mathllap{\mathpalette\mathllapinternal}
\def\mathllapinternal#1#2{%
  \llap{$\mathsurround=0pt#1{#2}$}}
\def\trail#1{\text{~#1}}
\def\into{\rightarrow}
\def\tb{\tau^\leftarrow}
\def\tbtci{\tau^\Leftarrow}
\def\tbsafetci{\tau^{\Longleftarrow}}
\def\fsidistance{d}
\def\fsivalidmarking{\alpha}
\def\tcidistance{d}
\def\tcivalidmarking{\beta}
\def\refitem#1{``\ref{#1}''}
\def\contradiction{\scalebox{1.4}{$\lightning$}}
\def\defitem#1{\emph{#1}}
\def\rpair#1{\mathord{<}#1\mathord{>}}
\def\FSA{\text{\it FA}}
\def\SA{\text{\it SA}}
\def\AA{\text{\it AA}}
\let\origexists\exists
\let\orignexists\nexists
\let\origforall\forall
\def\exists#1.{\origexists#1.\onespace}
\def\nexists#1.{\orignexists#1.\onespace}
\def\forall#1.{\origforall#1.\onespace}
\def\onespace#1{\let\argument=#1\ifx\onespace#1\else~\fi\argument}
\def\readyset{\mathscr{R}}
\def\structuralN{\mbox{\sf N}}
\def\structuralM{\mbox{\sf M}}
\let\origmin\min
\def\min{\mathord{\origmin}}
\let\origmax\max
\def\max{\mathord{\origmax}}
\def\circled#1{\pscirclebox[linewidth=0.015,framesep=2pt]{#1}}
\def\boxed#1{\,\psframebox[linewidth=0.015,framesep=2pt]{#1}\,}
\def\quireunderscore{_}
\def\quire#1{%
  \def\tmp{#1}%
  \ifx\tmp\quireunderscore%
    \def\tmp{\quireindexed_}
  \else%
    \def\tmp{\mathscr{Q}#1}
  \fi\tmp}
\def\quireindexed_#1{\mathscr{Q}_{\text{#1}}}
\def\goesto{\@transition\rightarrowfill}
\def\Goesto{\@transition\Rightarrowfill}
\def\ngoesto{\@transition\nrightarrowfill}
\def\nGoesto{\@transition\nRightarrowfill}
\def\@transition#1{\@ifnextchar[{\@@transition{#1}}{\@@transition{#1}[]}}
\newbox\@transbox
\newbox\@arrowbox
\def\rightarrowfill{$\m@th\mathord-\mkern-6mu%
  \cleaders\hbox{$\mkern-2mu\mathord-\mkern-2mu$}\hfill
  \mkern-6mu\mathord\rightarrow$}
\def\Rightarrowfill{$\m@th\mathord=\mkern-6mu%
  \cleaders\hbox{$\mkern-2mu\mathord=\mkern-2mu$}\hfill
  \mkern-6mu\mathord\Rightarrow$}
\def\@@transition#1[#2]%
\wd\@transbox{#1}
\@transbox\hbox{$\mathop{\box\@arrowbox}\limits^{\box\@transbox}$}
\def\alignedcaption[#1&#2]{\mbox{\scriptsize $\mathllap{#1{}}\mathrlap{#2}$}}
\def\ie{i.e.\ }
\def\varnothing{\emptyset}
\def\Act{\textrm{\upshape Act}}
\def\Loc{\textrm{\upshape Loc}}
\def\concurrent{\smile}
\def\conflict{\mathrel{\#}}
\def\restrictedto{\upharpoonright}
\newcommand{\IN}{\mbox{\rm I\hspace{-1.5pt}N}}        
\newcommand{\dcup}{\stackrel{\mbox{\huge .}}{\cup}}   
\newcommand{\plat}[1]{\raisebox{0pt}[0pt][0pt]{#1}}   
\newcommand{\inp}{\mathbin\in}                        
\newcommand{\notinp}{\mathbin{\notin}}                
\newcounter{netimage}
\def\p#1:#2;{\cnode #1{0.3}{n\thenetimage-#2}}
\def\P#1:#2;{\p #1:#2;\pscircle*#1{0.1}}
\def\q#1:#2:#3;{\p #1:#2;\rput#1{\rput[l](0.45,0){\large\it #3}}}
\def\Q#1:#2:#3;{\P #1:#2;\rput#1{\rput[l](0.45,0){\large\it #3}}}
\def\t#1:#2:#3;{\rput#1{\rnode{n\thenetimage-#2}{\psframebox{%
  \vbox to 0.6cm{\vfil\hbox to 0.6cm{\hfil\Large\it #3\hfil}\vfil}}}}}
\def\u#1:#2:#3:#4;{\rput#1{\rnode{n\thenetimage-#2}{\psframebox{%
  \vbox to 0.6cm{\vfil\hbox to 0.6cm{\hfil\Large\it #3\hfil}\vfil}}}}%
  \rput#1{\rput[l](0.6,0){\large\it #4}}}
\def\a#1->#2;{\ncline{->}{n\thenetimage-#1}{n\thenetimage-#2}}
\def\A#1->#2;{\ncarc{->}{n\thenetimage-#1}{n\thenetimage-#2}}
\def\av#1[#2]-#3->[#4]#5;{
  \SpecialCoor
  \psline[linearc=0.2]{->}([angle=#2]n\thenetimage-#1)#3([angle=#4]n\thenetimage-#5)
}
\long\def\petrinet(#1)#2\end{\psscalebox{0.7}{\pspicture(#1)\stepcounter{netimage}#2\endpspicture}\end}
\def\lastname{van Glabbeek, Goltz and Schicke}
\def\titleheader{On Synchronous and Asynchronous Interaction in Distributed Systems}
\title{On Synchronous and Asynchronous Interaction in Distributed Systems}
\author{Rob van Glabbeek\\
 \footnotesize NICTA, Sydney, Australia\\[-3pt]
 \footnotesize University of New South Wales, Sydney, Australia\\
 \footnotesize \tt rvg@cs.stanford.edu
\and\hspace{-1em}
    Ursula Goltz
\hspace{4em}
  Jens-Wolfhard Schicke\thanks{This paper was partially
  written during a four month stay of J.-W. Schicke at NICTA,
  during which he was supported by DAAD (Deutscher Akademischer Austauschdienst) and NICTA.
}\\
 \footnotesize Institute for Programming and Reactive Systems\\[-3pt]
 \footnotesize TU Braunschweig, Braunschweig, Germany\\
 \footnotesize \tt 
\hspace{-4.5em}goltz@ips.cs.tu-bs.de
\hspace{3.5em} drahflow@gmx.de}
\begin{document}
\maketitle
\thispagestyle{empty}

\begin{abstract}
   When considering distributed systems, it is a central issue how to
   deal with interactions between components. In this paper, we
   investigate the paradigms of synchronous and asynchronous
   interaction in the context of distributed systems.  We investigate
   to what extent or under which conditions synchronous interaction is
   a valid concept for specification and implementation of such
   systems.  We choose Petri nets as our system model and consider
   different notions of distribution by associating locations to
   elements of nets.  First, we investigate the concept of
   simultaneity which is inherent in the semantics of Petri nets when
   transitions have multiple input places. We assume that tokens may
   only be taken instantaneously by transitions on the same
   location. We exhibit a hierarchy of `asynchronous' Petri net
   classes by different assumptions on possible distributions.
   Alternatively, we assume that the synchronisations specified in a
   Petri net are crucial system properties. Hence transitions and
   their preplaces may no longer placed on separate locations. We then
   answer the question which systems may be implemented in a
   distributed way without restricting concurrency, assuming that
   locations are inherently sequential.  It turns out that in both
   settings we find semi-structural properties of Petri nets
   describing exactly the problematic situations for interactions in
   distributed systems.
\end{abstract}

\section{Introduction}

In this paper, we address interaction patterns in distributed
systems. By a distributed system we understand here a system which is
executed on spatially distributed locations, which do not share a
common clock (for performance reasons for example). We want to
investigate to what extent or under which conditions synchronous
interaction is a valid concept for specification and implementation of
such systems. It is for example a well-known fact that synchronous
communication can be simulated by asynchronous communication using
suitable protocols.  However, the question is whether and under which
circumstances these protocols fully retain the original behaviour of a
system.  What we are interested in here are precise descriptions of
what behaviours can possibly be preserved and which cannot.
 
The topic considered here is by no means a new one. We give a short
overview on related approaches in the following.

Already in the 80th, Luc Boug\'e considered a similar problem in the
context of distributed algorithms. In \cite{bouge88symmetricleader} he
considers the problem of implementing symmetric leader election in the
sublanguages of CSP obtained by allowing different forms of
communication, combining input and output guards in guarded choice in
different ways. He finds that the possibility of implementing leader
election depends heavily on the structure of the communication
graphs. Truly symmetric schemes are only possible in CSP with
arbitrary input and output guards in choices.

Synchronous interaction is a basic concept in many languages for
system specification and design, e.g. in statechart-based approaches,
in process algebras or the $\pi$-calculus. For process algebras and
the $\pi$-calculus, language hierarchies have been established which
exhibit the expressive power of different forms of synchronous and
asynchronous interaction.  In \cite{boer91embedding} Frank de Boer and
Catuscia Palamidessi consider various dialects of CSP with differing
degrees of asynchrony.  Similar work is done for the $\pi$-calculus in
\cite{palamidessi97comparing} by Catuscia Palamidessi, in
\cite{nestmann00what} by Uwe Nestmann and in \cite{G:FoSSaCS06} by
Dianele Gorla.  A rich hierarchy of asynchronous $\pi$-calculi has
been mapped out in these papers. Again mixed-choice, i.e. the ability
to combine input and output guards in a single choice, plays a central
r\^ ole in the implementation of truly synchronous behaviour.

In \cite{selinger97firstorder}, Peter Selinger considers labelled
transition systems whose visible actions are partitioned into input and
output actions. He defines asynchronous implementations of such a
system by composing it with in- and output queues, and then
characterises the systems that are behaviourally equivalent to their
asynchronous implementations. The main difference with our approach is
that we focus on asynchrony within a system, whereas Selinger focusses
on the asynchronous nature of the communications of a system with the
outside world.

Also in hardware design it is an intriguing quest to use interaction
mechanisms which do not rely on a global clock, in order to gain
performance. Here the simulation of synchrony by asynchrony can be a
crucial issue, see for instance \cite{lamport78ordering} and
\cite{lamport02arbitration}.

In contrast to the approaches based on language constructs like the
work on CSP or the $\pi$-calculus, we choose here a very basic system
model for our investigations, namely Petri nets. The main reason for
this choice is the detailed way in which a Petri net represents a
concurrent system, including the interaction between the components it
may consist of. In an interleaving based model of concurrency such as
labelled transition systems modulo bisimulation semantics, a system
representation as such cannot be said to contain synchronous or
asynchronous interaction; at best these are properties of composition
operators, or communication primitives, defined in terms of such a
model. A Petri net on the other hand displays enough detail of a
concurrent system to make the presence of synchronous communication
discernible. This makes it possible to study synchronous and
asynchronous interaction without digressing to the realm of
composition operators.
  
  Also in Petri net theory, the topic which concerns us here has
  already been tackled. It has been investigated in
  \cite{hopkins91distnets} and \cite{taubner88zurverteiltenimpl}
  whether and how a Petri net can be implemented in a distributed
  way. We will comment on these and other related papers in the area
  of Petri net theory in the conclusion.
 
  In a Petri net, a transition interacts with its preplaces by
  consuming tokens. In Petri net semantics, taking a token is usually
  considered as an instantaneous action, hence a synchronous
  interaction between a transition and its preplace. In particular
  when a transition has several preplaces this becomes a crucial
  issue. In this paper we investigate what happens if we consider a
  Petri net as a specification of a system that is to be implemented
  in a distributed way. For this we introduce locations on which all
  elements of a Petri net have to be placed upon. The basic assumption
  is that interaction between remote components takes time.  In our
  framework this means that the removal of a token will be considered
  instantaneous only if the removing transition and the place where
  the token is removed from are co-located. Our investigations are now
  twofold.

  In Section~\ref{asynchronous} of this paper, we consider under which
  circumstances the synchronous interaction between a transition and
  its preplace may be mimicked asynchronously, thus allowing to put
  places and their posttransitions on different locations. Following
  \cite{glabbeek08symmasymm}, we model the asynchronous interaction
  between transitions and their preplaces by inserting silent
  (unobservable) transitions between them.  We investigate the effect
  of this transformation by comparing the behaviours of nets before
  and after insertion of the silent transitions using a suitable
  equivalence notion.  We believe that most of our results are
  independent of the precise choice of this equivalence. However, as
  explained in Section~\ref{distributable}, it has to preserve
  causality, branching time and divergence to some small extent, and
  needs to abstract from silent transitions.  Therefore we choose one
  such equivalence, based on its technical convenience in establishing
  our results. Our choice is \emph{step readiness equivalence}. It is
  a variant of the \emph{readiness equivalence} of \cite{OH86},
  obtained by collecting the set of \emph{steps} of multiple actions
  possible after a certain sequence of actions, instead of just the
  set of possible actions.
  We call a net \textit{asynchronous} if, for a suitable placement of
  its places and transitions, the above-mentioned
  transformation replacing synchronous by asynchronous interaction
  preserves step readiness equivalence.
  Depending on the allowed placements, we obtain
  a hierarchy of
  classes of asynchronous nets:
  \textit{fully asynchronous} nets, \textit{symmetrically
  asynchronous} nets and \textit{asymmetrically asynchronous} nets.
  We give semi-structural properties that characterise precisely when\linebreak[3]
  a net falls into one of these classes. This puts the results from
  \cite{glabbeek08symmasymm} in a uniform framework and extends them by
  introducing a simpler notion of asymmetric asynchrony.
  
 In Sections~\ref{distributed systems} and~\ref{distributable} we
 pursue an alternative approach. We assume that the synchronisations
 specified in a Petri net are crucial system properties.  Hence we
 enforce co-locality between a transition and all its preplaces while
 at the same time assuming that concurrent activity is not possible at
 a single location. We call nets fulfilling these requirement
 \textit{distributed} and investigate which behaviours can be
 implemented by distributed nets.  Again we compare the behaviours up
 to step readiness equivalence.  We call a net \textit{distributable}
 iff its behaviour can be equivalently produced by a distributed
 net. We give a behavioural and a semi-structural characterisation of
 a class of non-distributable nets, thereby exhibiting behaviours
 which cannot be implemented in a distributed way at all.  Finally, we
 give a lower bound of distributability by providing a concrete
 distributed implementation for a wide range of nets.

An extended abstract of this paper will appear in the proceedings of the
33rd International Symposium on {\sl Mathematical Foundations of Computer
Science} (MFCS 2008), Toru\'n, Poland, August 2008 (E. Ochma\'nski \&
J. Tyszkiewicz, eds.), LNCS 5162, Springer, 2008, pp. 16-35.

\section{Basic Notions}
\label{basic}

We consider here 1-safe net systems, \ie places never carry more than
one token, but a transition can fire even if pre- and postset intersect.

\begin{definitionG}{df-nst}
  {Let \Act{} be a set of \emph{visible actions} and
  $\tau\mathbin{\not\in}\Act$ be an \emph{invisible action}.}
  A \defitem{labelled net} (over \Act) is a tuple
  $N = (S, T, F, M_0, \ell)$ where
  \begin{itemise}
    \item $S$ is a set (of \defitem{places}),
    \item $T$ is a set (of \defitem{transitions}),
    \item $F \subseteq
       S \times T \cup T \times S$
      (the \defitem{flow relation}),
    \item $M_0 \subseteq S$ (the \defitem{initial marking}) and
    \item \plat{$\ell: T \into \Act \dcup \{\tau\}$} (the \defitem{labelling function}).
\vspace{5pt}
  \end{itemise}
\end{definitionG}

\noindent
Petri nets are depicted by drawing the places as circles, the
transitions as boxes containing the respective label, and the flow
relation as arrows (\defitem{arcs}) between them.  When a Petri net
represents a concurrent system, a global state of such a system is
given as a \defitem{marking}, a set of places, the initial state being
$M_0$.  A marking is depicted by placing a dot (\defitem{token}) in
each of its places.  The dynamic behaviour of the represented system
is defined by describing the possible moves between markings. A
marking $M$ may evolve into a marking $M'$ when a nonempty set of transitions
$G$ \defitem{fires}. In that case, for each arc $(s,t) \in F$ leading
to a transition $t$ in $G$, a token moves along that arc from $s$ to
$t$.  Naturally, this can happen only if all these tokens are
available in $M$ in the first place. These tokens are consumed by the
firing, but also new tokens are created, namely one for every outgoing
arc of a transition in $G$. These end up in the places at the end of
those arcs.  A problem occurs when as a result of firing $G$ multiple
tokens end up in the same place. In that case $M'$ would not be
a marking as defined above. In this paper we restrict
attention to nets in which this never happens. Such nets are called
\defitem{1-safe}.  Unfortunately, in order to formally define this
class of nets, we first need to correctly define the firing rule
without assuming 1-safety. Below we do this by forbidding the firing
of sets of transitions when this might put multiple tokens in the same
place.

\begin{definitionG}{df-steps}
{Let $N = (S, T, F, M_0, \ell)$ be a labelled net. Let $M_1, M_2 \subseteq S$.}
We denote the preset and postset of a net element $x\in S\cup T$ by
$\precond{x} := \{y \mid (y, x) \in F\}$
and
$\postcond{x} := \{y \mid (x, y) \in F\}$
respectively.
These functions are extended to sets in the usual manner, \ie
$\precond{X} := \{y \mid y \inp \precond{x},~ x \inp X\}$.

A nonempty set of transitions $\varnothing \mathbin{\not=}G \mathbin{\subseteq} T$,
  is called a \defitem{step from $M_1$ to $M_2$},
  notation $M_1 \,[G\rangle_N\, M_2$, if
  \begin{itemise}
    \item all transitions contained in $G$ are \defitem{enabled}, that is
      \vspace{-1ex}\begin{equation*}
        \forall t\in G. \precond{t} \subseteq M_1 \wedge 
          (M_1 \setminus \precond{t}) \cap \postcond{t} = \varnothing \trail{,}
      \vspace{-1ex}\end{equation*}
    \item all transitions of $G$ are \defitem{independent}, that is \defitem{not conflicting}:
      \vspace{-1ex}\begin{equation*}
        \forall t,u \in G, t\not= u. \precond{t} \cap \precond{u} = \varnothing
        \wedge \postcond{t} \cap \postcond{u} = \varnothing \trail{,}
      \vspace{-1ex}\end{equation*}
    \item in $M_2$ all tokens have been removed from the \defitem{preplaces}
      of $G$ and new tokens have been inserted at the \defitem{postplaces} of $G$:
      \vspace{-1ex}\begin{equation*}
        M_2 = \left(M_1 \setminus \precond{G}\right) \cup \postcond{G} \trail{.}
      \end{equation*}
  \end{itemise}
\end{definitionG}

\noindent
To simplify statements about possible behaviours of nets, we use some abbreviations.

\begin{definitionG}{df-steprel}
  {Let $N = (S, T, F, M_0, \ell)$ be a labelled net.}
      We extend the labelling function $\ell$ to (multi)sets elementwise.

      \plat{$\mathord{\production{}_N} \subseteq \powerset{S} \times
      \powermultiset{\Act} \times \powerset{S}$}
      is given by $M_1 \production{A}_N M_2 \equivalent
      \exists\, G \subseteq T. M_1~[G\rangle_N~ M_2 \wedge A = \ell(G)$

      $\mathord{\production{\tau}_N} \subseteq \powerset{S} \times \powerset{S}$
      is defined by $M_1 \production{\tau}_N M_2 \equivalent
      \exists t \inp T. \ell(t) \mathbin= \tau \wedge M_1 ~[\{t\}\rangle_N~ M_2$

      $\mathord{\Production{}_N} \subseteq \powerset{S} \times \Act^* \times
      \powerset{S}$ is defined by
      $M_1 \Goesto[\,a_1 a_2 \cdots a_n~]_N M_2 \equivalent\\[3pt]
      \hphantom{M_1 \Production{\sigma}_N M_2 \equivalent }
      M_1
      \production{\tau}^*_N \production{\{a_1\}}_N
      \production{\tau}^*_N \production{\{a_2\}}_N
      \production{\tau}^*_N \cdots
      \production{\tau}^*_N \production{\{a_n\}}_N
      \production{\tau}^*_N
      M_2$\\[3pt]
      where $\production{\tau}^*_N$ denotes the reflexive and
      transitive closure of $\production{\tau}_N$.

  We write $M_1 \production{A}_N$ for $\exists M_2. M_1
  \production{A}_N M_2$, $M_1 \arrownot\production{A}_N$ for $\nexists M_2. M_1
  \production{A}_N M_2$ and similar for the other two relations.
  Likewise $M_1 [G\rangle_N$ abbreviates $\exists M_2. M_1 [G\rangle_N M_2$.

  A marking $M_1$ is said to be \defitem{reachable} iff there is a
  $\sigma \in \Act^*$ such that $M_0 \Production{\sigma}_N M_1$. The set of all
  reachable markings is denoted by $[M_0\rangle_N$.
\end{definitionG}

\noindent
We omit the subscript $N$ if clear from context.

As said before, here we only want to consider 1-safe nets. Formally,
we restrict ourselves to \defitem{contact-free nets}, where in every
reachable marking $M_1 \in [M_0\rangle$ for all $t \in T$ with
  $\precond{t} \subseteq M_1$\vspace{-1ex}
\begin{equation*}
  (M_1 \setminus \precond{t}) \cap \postcond{t} = \varnothing \trail{.}
\vspace{1pt}
\end{equation*}
For such nets, in \refdf{steps} we can just as well consider a
transition $t$ to be enabled in $M$ iff $\precond{t}\subseteq M$, and
two transitions to be independent when $\precond{t} \cap \precond{u} =
\varnothing$.

In this paper we furthermore restrict attention to nets for which
$\precond{t}\neq\emptyset$ and $\precond{t}$ and $\postcond{t}$ are
finite for all $t\inp T$ and $\postcond{s}$ is finite for all $s \inp
S$.  We also require the initial marking $M_0$ to
be finite.  A consequence of these restrictions is that all reachable
markings are finite, and it can never happen that infinitely many
independent transitions are enabled.
Henceforth, with \emph{net} we mean a labelled net obeying the above
restrictions.

In our nets transitions are labelled with \emph{actions} drawn from a
set \plat{$\Act \dcup \{\tau\}$}. This makes it possible to see these
nets as models of \defitem{reactive systems}, that interact with their
environment. A transition $t$ can be thought of as the occurrence of
the action $\ell(t)$. If $\ell(t)\inp\Act$, this occurrence can be
observed and influenced by the environment, but if $\ell(t)\mathbin=\tau$,
$t$ is an \defitem{internal} or \defitem{silent} transition whose
occurrence cannot be observed or influenced by the environment. Two
transitions whose occurrences cannot be distinguished by the
environment are equipped with the same label. In particular, given
that the environment cannot observe the occurrence of internal
transitions at all, all of them have the same label, namely $\tau$.

We use the term \defitem{plain nets} for nets where $\ell$ is
injective and no transition has the label $\tau$, \ie essentially
unlabelled nets. Similarly, we speak of \defitem{plain $\tau$-nets} to
describe nets where $\ell(t) = \ell(u) \ne \tau \implies t = u$,
\ie nets where every observable action is produced by a unique
transition.
In this paper we focus on plain nets, and give semi-structural
characterisations of classes of plain nets only.
However, in defining whether a net belongs to one of those classes, we
study its implementations, which typically are plain $\tau$-nets.
When proving our impossibility result (\refthm{trulysyngltfullm} in
Section~\ref{distributable}) we even allow arbitrary nets as implementations.

We use the following variation of readiness semantics \cite{OH86} to
compare the behaviour of nets.

\begin{definitionG}{df-readypair}
  {Let $N = (S, T, F, M_0, \ell)$ be a net, $\sigma \in \Act^*$ and
  $X \subseteq \powermultiset{\Act}$.}
  $\rpair{\sigma, X}$ is a \defitem{step ready pair} of $N$ iff\vspace{-3pt}
  $$\exists M. M_0 \Production{\sigma} M \wedge M \arrownot\production{\tau}
  \wedge \, X = \{A \inp \powermultiset{\Act} \mid M \production{A}\}.$$
  We write $\readyset(N)$ for the set of all step ready pairs of $N$.
\\
  Two nets $N$ and $N'$ are \defitem{step readiness equivalent},
  $N \approx_\mathscr{R} N'$, iff $\readyset(N) = \readyset(N')$.
\end{definitionG}
The elements of a set $X$ as above are multisets of actions,
but as in all such multisets that will be mentioned in this paper
the multiplicity of each action occurrence is at most~1, we use set
notation to denote them.

\section{Asynchronous Petri Net Classes}
\label{asynchronous}

  In Petri nets, an inherent concept of simultaneity is built in, since
  when a transition has more than one preplace, it can be crucial that tokens
  are removed instantaneously.  When using a Petri net to model a system which
  is intended to be
  implemented in a distributed way, this built-in concept of
  synchronous interaction may be problematic.

  In this paper, a given net is regarded as a \emph{specification} of
  how a system should behave, and this specification involves complete
  synchronisation of the firing of a transition and the removal of all
  tokens from its preplaces. In this section, we propose various
  definitions of an \emph{asynchronous implementation} of a net $N$,
  in which such synchronous interaction is wholly or partially ruled
  out and replaced by asynchronous interaction. The question to be
  clarified is whether such an asynchronous implementation faithfully
  mimics the dynamic behaviour of $N$. If this is the case, we call
  the net $N$ \emph{asynchronous} with respect to the chosen
  interaction pattern.

  The above programme, and thus the resulting concept of
  asynchrony, is parametrised by the answers to three questions:\vspace{-3pt}
\begin{enumerate}
\vspace{-7pt}
\item Which synchronous interactions do we want to rule out exactly?
\vspace{-7pt}
\item How do we replace synchronous by asynchronous interaction?
\vspace{-7pt}
\item When does one net faithfully mimic the dynamic behaviour of another?
\vspace{-7pt}
\end{enumerate}
  To answer the first question we associate a \emph{location} to each
  place and each transition in a net. A transition may take a token
  instantaneously from a preplace (when firing) iff this preplace is
  co-located with the transition; if the preplace resides on a
  different location than the transition, we have to assume the
  collection of the token takes time, and thus the place looses its
  token \emph{before} the transition fires.

  We model the association of locations to the places and transitions
  in a net $N=(S,T,F,M_0,\ell)$ as a function $D: S\cup T \rightarrow
  \Loc$, with $\Loc$ a set of possible locations.  We refer to such a
  function as a \defitem{distribution} of $N$.  Since the identity of
  the locations is irrelevant for our purposes, we can just as well
  abstract from $\Loc$ and represent $D$ by the equivalence relation
  $\equiv_D$ on $S\cup T$ given by $x \equiv_D y$ iff $D(x)=D(y)$.
  
  In this paper we do not deal with nets that have a distribution
  built in. We characterise the interaction patterns we are interested
  in by imposing particular restrictions on the allowed distributions.
  The implementor of a net can choose any distribution that satisfies
  the chosen requirements, and we call a net asynchronous for a
  certain interaction pattern if it has a correct asynchronous
  implementation based on any distribution satisfying the respective
  requirements.

  The \defitem{fully asynchronous} interaction pattern is obtained by requiring
  that all places and all transitions reside on different locations.
  This makes it necessary to implement the removal of every token in a
  time-consuming way. However, this leads to a rather small class of
  asynchronous nets, that falls short for many applications.  We
  therefore propose two ways to loosen this requirement, thereby
  building a hierarchy of classes of asynchronous nets. Both require
  that all places reside on different locations, but a transition may
  be co-located with one of its preplaces.
  The \defitem{symmetrically asynchronous} interaction pattern allows
  this only for transitions with a single preplace, whereas
  in the \defitem{asymmetrically asynchronous} interaction pattern
  any transition may be
  co-located with one of its preplaces. Since two preplaces can never
  be co-located, this breaks the symmetry between the preplaces of a
  transition; an implementor of a net has to choose at most one
  preplace for every transition, and co-locate the transition with it.
  The removal of tokens from all other preplaces needs to be
  implemented in a time-consuming way. Note that all three interaction
  patterns break the synchronisation of the token removal between
  the various preplaces.

\begin{definitionG}{df-icerequirements}
 {Let $D$ be a distribution on a net $N = (S, T, F, M_0, \ell)$,}
 and let $\equiv_D$ be the induced equivalence relation on $S \cup T$. We say
 that $D$ is
  \begin{itemise}
    \item \defitem{fully distributed},
      $D \in \quire_{FD}$, when
      $x \equiv_D y$ for $x,y \in S\cup T$ only if $x = y$,
    \item \defitem{symmetrically distributed},
      $D \in \quire_{SD}$, when
      \begin{center}
	\begin{tabular}{rll}
	  $p \equiv_D q$ &for $p,q \in S$ &only if $p = q$,\\
	  $t \equiv_D p$ &for $t \inp T\!$, $\,p \inp S$ &only if
	    $\precond{t} = \{p\}$ and\\
	  $t \equiv_D u$ &for $t,u \in T$ &only if
	    $t = u$ or $\exists p \inp S. t \equiv_D p \equiv_D u$,\\
	\end{tabular}
      \end{center}
    \item \defitem{asymmetrically distributed},
      $D \in \quire_{AD}$, when
      \begin{center}
	\begin{tabular}{rll}
	  $p \equiv_D q$ &for $p,q \in S$ &only if $p = q$,\\
	  $t \equiv_D p$ &for $t \inp T\!$, $\,p \inp S$ &only if
	    $p \in \precond{t}$ and\\
	  $t \equiv_D u$ &for $t,u \in T$ &only if
	    $t = u$ or $\exists p \inp S. t \equiv_D p \equiv_D u$.\\
	\end{tabular}
      \end{center}
  \end{itemise}
\end{definitionG}

\noindent
  The second question raised above was: How do we replace synchronous by asynchronous interaction? In this section we assume that if
  an arc goes from a place $s$ to a transition $t$ at a different
  location, a token takes time to move from $s$ to $t$. Formally, we describe this by inserting silent (unobservable)
  transitions between transitions and their remote preplaces.
  This leads to the following notion of an asynchronous implementation of a net
  with respect to a chosen distribution.

\begin{definitionG}{df-fsi}
  {Let $N = (S, T, F, M_0, \ell)$ be a net, and let $\equiv_D$ be an
  equivalence relation on $S\cup T$.}
  The \defitem{$D$-based asynchronous implementation} of $N$ is
  $I_D(N) := (S \cup S^\tau, T\cup T^\tau, F', M_0, \ell')$ with\vspace{-1ex}
  $$\begin{array}{lll}
    S^\tau :=& \{s_t \mid t \in T,~ s \in \precond{t},~ s \not\equiv_D t\} \trail{,}\\[3pt]
    T^\tau :=& \{\mathrlap{t_s}\hphantom{s_t} \mid t \in T,~ s \in \precond{t},~ s \not\equiv_D t\} \trail{,}\\[3pt]
    F' :=& \{(t, s) \mid t \in T,~ s \in \postcond{t}\}
      \cup \{(s, t) &\mid t \in T,~ s \in \precond{t},~ s \equiv_D t\}\\
    & \hfill {} \cup \{(s, t_s),(t_s, s_t),(s_t,t) &\mid t \in T,~
     s \in \precond{t},~ s \not\equiv_D t\} \trail{,}\\[3pt]
    \ell'\restrictedto T &= \ell \qquad \mbox{and} \qquad \ell'(t_s)=\tau 
    \hfill\mbox{for}&~~ t_s \in T^\tau .
  \end{array}$$
\vspace{-2em}
\end{definitionG}
  
\begin{proposition}{pr-}
  For any (contact-free) net $N$, and any choice of $\equiv_D$, the
  net $I_D(N)$ is contact-free, and satisfies the other requirements
  imposed on nets, listed in Section~\ref{basic}.
\end{proposition}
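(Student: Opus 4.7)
The structural conditions are immediate from the construction. Each new transition $t_s \inp T^\tau$ has preset $\{s\}$ and postset $\{s_t\}$, both nonempty finite singletons. For an original $t \inp T$, the new preset is in bijection with $\precond{t}|_N$ (each preplace $s$ contributes either $s$, when $s \equiv_D t$, or $s_t$, when $s \not\equiv_D t$), and the postset is unchanged; hence both are finite and the preset is nonempty. For each new place $s_t \inp S^\tau$ the postset is $\{t\}$; for an original $s \inp S$ the postset in $I_D(N)$ arises from $\postcond{s}|_N$ by replacing each remote posttransition $t$ by $t_s$, so it is finite. The initial marking is unchanged.

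For contact-freeness I would use a simulation argument. Define $\phi : \powerset{S \cup S^\tau} \into \powermultiset{S}$ by
$$\phi(M)(s) \,=\, |\{s\} \cap M| \,+\, |\{t \inp T \mid s_t \inp M\}|,$$
so that $\phi(M)(s)$ counts both the token on $s$ and the tokens already ``in transit'' along channels from $s$ to any posttransition. I would then prove by induction on the length of firing sequences of $I_D(N)$ reaching $M$ the following invariant: \textbf{(a)} $\phi(M) \inp [M_0\rangle_N$, and in particular $\phi(M)$ is a set (1-safe); and \textbf{(b)} every transition $u$ of $I_D(N)$ enabled at $M$ satisfies $(M \setminus \precond{u}) \cap \postcond{u} = \varnothing$.

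The base case $M = M_0$ is immediate: $\phi(M_0) = M_0$; any enabled $t \inp T$ must have no remote preplace (otherwise some $s_t \inp S^\tau$ would have to sit in $M_0 \subseteq S$), so $\precond{t}|_{I_D(N)} = \precond{t}|_N$ and contact-freeness of $N$ transfers directly, while any enabled $t_s \inp T^\tau$ has postset $\{s_t\}$ disjoint from $M_0 \subseteq S$. For the inductive step, a firing $M_1\,[\{t_s\}\rangle\,M_2$ with $t_s \inp T^\tau$ moves a token from $s$ to $s_t$ and hence leaves $\phi$ invariant, while a firing $M_1\,[\{t\}\rangle\,M_2$ with $t \inp T$ corresponds under $\phi$ exactly to firing $t$ in $N$ from $\phi(M_1)$: it is enabled there because each $s \inp \precond{t}|_N$ contributes at least one token to $\phi(M_1)(s)$, so by contact-freeness of $N$ one obtains $\phi(M_2) \inp [M_0\rangle_N$ and $\phi(M_2)$ is again a set.

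Clause (b) at $M_2$ is then reduced to a case analysis on the form of the next enabled transition $u'$: a hypothetical offending place $s'' \inp (M_2 \setminus \precond{u'}) \cap \postcond{u'}$ is either a local preplace of $u'$, a remote preplace represented by some $s''_{t'}$, or a post-place just produced; in each case one finds either $\phi(M_2)(s'') \geq 2$, contradicting the 1-safety established in (a), or a contact-violation of $N$ at $\phi(M_2)$, contradicting the contact-freeness of $N$. This last case distinction, together with the bookkeeping of where the extra token on $s''$ could have come from, is the only part that requires care; everything else follows straightforwardly from the simulation.
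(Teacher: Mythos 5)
Your proposal is correct and takes essentially the same route as the paper's appendix proof: your projection $\phi$ with the invariant that $\phi(M)$ is a reachable marking of $N$ and is a set is exactly the paper's map $\tb$ together with its predicate $\fsivalidmarking$ (reachability of $\tb(M)$ in $N$ plus injectivity of $\tb$ on $M$), established there by the same induction on firing sequences. The concluding case analysis deriving contact-freeness of $I_D(N)$ from this invariant is likewise the paper's argument (a remote preplace of the offending transition forces two tokens over one place of $N$, while a genuinely new postplace forces a contact violation in $N$ itself), so the two proofs differ only in bookkeeping --- multiset counts versus an injectivity predicate, and folding the contact-freeness claim into the induction rather than deriving it afterwards.
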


\begin{proof}
  In Appendix~\ref{app-asynchronous}.
\end{proof}

\noindent
  The above protocol for replacing synchronous by asynchronous
  interaction appears to be one of the simplest ones imaginable.
  More intricate protocols, involving many asynchronous messages
  between a transition and its preplaces, could be contemplated, but
  we will not study them here. Our protocol involves just one such
  message, namely from the preplace to its posttransition.
  It is illustrated in \reffig{impl-result}.

\begin{figure}
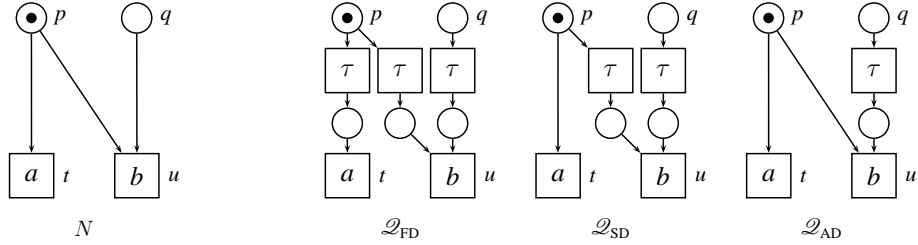

  \begin{center}
    \begin{petrinet}(16,4.5)
      \rput(1,0){\large$N$}

      \Q (0,4):p1:p;
      \q (2,4):p2:q;
      \u (0,1):t1:a:t;
      \u (2,1):t2:b:u;
      \a p1->t1;
      \a p1->t2;
      \a p2->t2;

      \rput(7,0){\large$\quire_{FD}$}

      \Q (6,4):p1b:p;
      \t (6,3):p1btt:$\tau$;
      \p (6,2):p1bt1p;
      \q (8,4):p2b:q;
      \t (8,3):p2btt:$\tau$;
      \p (8,2):p2btp;
      \t (7,3):p1bt2t:$\tau$;
      \p (7,2):p1bt2p;
      \u (6,1):t1b:a:t;
      \u (8,1):t2b:b:u;
      \a p1b->p1btt; \a p1btt->p1bt1p; \a p1bt1p->t1b;
      \a p2b->p2btt; \a p2btt->p2btp; \a p2btp->t2b;
      \a p1b->p1bt2t; \a p1bt2t->p1bt2p; \a p1bt2p->t2b;

      \rput(11,0){\large$\quire_{SD}$}

      \Q (10,4):p1b:p;
      \q (12,4):p2b:q;
      \t (12,3):p2btt:$\tau$;
      \p (12,2):p2btp;
      \t (11,3):p1bt2t:$\tau$;
      \p (11,2):p1bt2p;
      \u (10,1):t1b:a:t;
      \u (12,1):t2b:b:u;
      \a p1b->t1b;
      \a p2b->p2btt; \a p2btt->p2btp; \a p2btp->t2b;
      \a p1b->p1bt2t; \a p1bt2t->p1bt2p; \a p1bt2p->t2b;

      \rput(15,0){\large$\quire_{AD}$}

      \Q (14,4):p1b:p;
      \q (16,4):p2b:q;
      \t (16,3):p2btt:$\tau$;
      \p (16,2):p2btp;
      \u (14,1):t1b:a:t;
      \u (16,1):t2b:b:u;
      \a p1b->t1b;
      \a p2b->p2btt; \a p2btt->p2btp; \a p2btp->t2b;
      \a p1b->t2b;
    \end{petrinet}
  \end{center}
\vspace{-1em}
  \caption{Possible results for $I_D(N)$ given different requirements}
  \label{fig-impl-result}
\end{figure}

\noindent
  The last question above was:
  When does one net faithfully mimic the dynamic behaviour of another?
  This asks for a \emph{semantic equivalence} on Petri
  nets, telling when two nets display the same behaviour.  Many such
  equivalences have been studied in the literature.  We believe that
  most of our results are independent of the precise choice of a
  semantic equivalence, as long as it preserves causality and
  branching time to some degree, and abstracts from silent
  transitions. Therefore we choose one such equivalence, based on its
  technical convenience in establishing our results, and postpone
  questions on the effect of varying this equivalence for further
  research. Our choice is \emph{step readiness equivalence}, as
  defined in Section~\ref{basic}. Using this equivalence, we define a
  notion of \emph{behavioural asynchrony} by asking whether the
  asynchronous implementation of a net preserves its behaviour. This
  notion is parametrised by the chosen interaction pattern,
  characterised as a requirement on the allowed distributions.

\begin{definitionG}{df-behaviourasync}
  {Let $\quire$ be a requirement on distributions of nets.}
  A plain net $N$ is \defitem{behaviourally $\quire$-asynchronous}
  iff there exists a distribution $D$ of $N$ meeting the requirement
  $\quire$ such that $I_D(N) \approx_{\mathscr{R}} N$.
\end{definitionG}

\noindent
 Intuitively, the only behavioural difference between a net $N$ and
 its asynchronous implementation $I_D(N)$ can occur when in $N$ a
 place $s \in \precond{u}$ is marked, whereas in $I_D(N)$ this token
 is already on its way from  $s$ to its posttransition $u$.
 In that case, it may occur that a transition $t\neq u$ with $s \in
 \precond{t}$ is enabled in $N$, whereas $t$ is not enabled in the described
 state of $I_D(N)$. We call the situation in $N$ leading to this
 state of $I_D(N)$ a \emph{distributed conflict}; it is in fact
 the only circumstance in which $I_D(N)$ fails to faithfully mimic the
 dynamic behaviour of $N$.

\begin{definitionG}{df-confusionfree}
  {Let $N = (S, T, F, M_0, \ell)$ be a net and $D$ a distribution of $N$.}
  $N$ has a \defitem{distributed conflict with respect to $D$} iff
  $$\exists t,u\inp T \;\exists p \inp \precond{t} \cap \precond{u}.
  t\neq u \wedge p \not\equiv_D u \wedge
  \exists M \inp [M_0\rangle_N. \precond{t} \subseteq M\trail{.}$$
\vspace{-2em}
\end{definitionG}

\noindent
 We wish to call a net $N$ \emph{(semi)structurally asynchronous} iff
 the situation outlined above never occurs, so that the asynchronous
 implementation does not change the behaviour of the net. As for
 behavioural asynchrony, this notion of asynchrony is
 parametrised by the set of allowed distributions.

\begin{definitionG}{df-structasync}
  {Let $\quire$ be a requirement on distributions of nets.}
  A net $N$ is \defitem{(semi)structurally $\quire$-asynchronous} iff there
  exists a distribution $D$ of $N$ meeting the requirement $\quire$
  such that $N$ has no distributed conflicts with respect to $D$.
\end{definitionG}

\noindent
The following theorem shows that distributed conflicts describe
exactly the critical situations: For all plain nets the notions of structural
and behavioural asynchrony coincide, regardless of the choice if $\quire$.

\begin{theoremG}{thm-plainbehstrcoincide}
  {Let $N$ be a plain net, and $\quire$ a requirement on
  distributions of nets.}
  Then $N$ is behaviourally $\quire$-asynchronous iff it is
  structurally $\quire$-asynchronous.
\end{theoremG}
\begin{proof}
  In Appendix~\ref{app-asynchronous}.
\end{proof}

\noindent
Because of this theorem, we call a plain net
$\quire$-asynchronous if it is behaviourally and/or structurally
$\quire$-asynchronous. In this paper we study this concept for
plain nets only. When taking $\quire=\quire_{FD}$ we speak
of \defitem{fully asynchronous nets}, when taking
$\quire=\quire_{SD}$ of \defitem{symmetrically asynchronous
  nets}, and when taking $\quire=\quire_{AD}$
of \defitem{asymmetrically asynchronous nets}.

\begin{example}
The net $N$ of \reffig{impl-result} is not fully asynchronous, for its
unique $D$-based asynchronous implementation $I_D(N)$ with $D\in
\quire_{FD}$ (also displayed in \reffig{impl-result}) is not step
readiness equivalent to $N$. In fact $\langle \varepsilon,
\emptyset\rangle \in \mathscr{R}(I_D(N))\setminus\mathscr{R}(N)$. This
inequivalence arises because in $I_D(N)$ the option to do an $a$-action
can be disabled already before any visible action takes place; this is
not possible in $N$.

The only way to avoid a distributed conflict in this net is by taking
$t \equiv_D p \equiv_D u$. This is not allowed for any $D \in
\quire_{FD}$ or $D \in \quire_{SD}$, but it is allowed for
$D \in \quire_{AD}$ (cf.\ the last net in \reffig{impl-result}).
Hence $N$ is asymmetrically asynchronous, but not symmetrically asynchronous.
\end{example}

\noindent
Since $\quire_{FD} \subseteq \quire_{SD} \subseteq \quire_{AD}$, any
fully asynchronous net is symmetrically asynchronous, and any
symmetrically asynchronous net is also asymmetrically asynchronous.
Below we give semi-structural characterisations of these three classes
of nets.
The first two stem from \cite{glabbeek08symmasymm}, where
the class of fully asynchronous nets is called $\FSA(B)$ and the class
of symmetrically asynchronous nets is called $\SA(B)$. The class
$\AA(B)$ in \cite{glabbeek08symmasymm} is somewhat larger than our
class of asymmetrically asynchronous nets, for it is based on a
slightly more involved protocol for replacing synchronous by
asynchronous interaction.

\pagebreak[2]
\begin{definitionG}{df-reachablem}
  {A plain net $N = (S, T, F, M_0, \ell)$ has a}
  \begin{itemise}
    \item \defitem{partially reachable conflict} iff
      $$\exists t,u\inp T \; \exists p \inp \precond{t} \cap \precond{u}.
      t\neq u \wedge 
      \exists M \inp [M_0\rangle_N. \precond{t} \subseteq M
      \trail{,}$$
    \item \defitem{partially reachable \structuralN} iff
      $$\exists t,u\inp T \; \exists p \inp \precond{t} \cap \precond{u}.
      t\neq u \wedge |\precond{u}|>1 \wedge 
      \exists M \inp [M_0\rangle_N. \precond{t} \subseteq M
      \trail{,}$$
    \item
      \defitem{left and right border reachable \structuralM} iff
      $$\exists t,u,v \inp T \;\exists p \inp \precond{t} \cap
      \precond{u}\; \exists q \inp \precond{u} \cap \precond{v}.
      \begin{array}{l} t \ne u \wedge u \ne v \wedge p \ne q \wedge {}\\
      \exists M_1, M_2 \inp
      [M_0\rangle_N. \precond{t} \subseteq M_1 \wedge \precond{v} \subseteq M_2
      \trail{.}
      \end{array}$$
  \end{itemise}
\vspace{-1em}
\end{definitionG}

\begin{theoremG}{thm-aabequalsweaklrmfree}
  {Let $N$ be a plain net.}
\begin{itemise}
\item $N$ is fully asynchronous iff it has no partially reachable conflict.
\item $N$ is symmetrically asynchronous iff it has no partially
  reachable $\structuralN$.
\item $N$ is asymmetrically asynchronous iff it has no left and
  right border reachable \structuralM.
\end{itemise}
\end{theoremG}
\begin{proof}
Straightforward with \refthm{plainbehstrcoincide}.
\end{proof}

\noindent
In the theory of Petri nets, there have been extensive studies on
classes of nets with certain structural properties like \textit{free
  choice nets} \cite{best83freesimple,bes87} and \textit{simple nets}
\cite{best83freesimple}, as well as extensions of theses classes. They
are closely related to the net classes defined here, but they are
defined without taking reachability into account. For a comprehensive
overview and discussion of the relations between those purely
structurally defined net classes and our net classes see
\cite{glabbeek08symmasymm}.
Restricted to plain nets without dead transitions (meaning that every
transition $t$ satisfies the requirement $\exists M \inp
[M_0\rangle. \precond{t} \subseteq M$), \refthm{aabequalsweaklrmfree}
says that a net is fully synchronous iff it is conflict-free in the structural sense (no shared preplaces),
symmetrically asynchronous iff it is a free choice net
and asymmetrically asynchronous iff it
is simple.  

Our asynchronous net classes are defined for plain nets only.
There are two approaches to lifting them to labelled nets.
One is to postulate that whether a net is asynchronous or not has
nothing to do with its labelling function, so that after replacing this
labelling by the identity function one can apply the insights above.
This way our structural characterisations
(Theorems~\ref{thm-plainbehstrcoincide}
and~\ref{thm-aabequalsweaklrmfree}) apply to labelled nets as well.
Another approach would be to apply the notion of behavioural asynchrony
of \refdf{behaviourasync} directly to labelled nets. This way more
nets will be asynchronous, because in some cases a net happens to be
equivalent to its asynchronous implementation in spite of a failure
of structural asynchrony. This happens for instance if all
transitions in the original net are labelled $\tau$.
Unlike the situation for plain nets, the resulting notion of
behavioural asynchrony will most likely be strongly dependent on the
choice of the semantic equivalence relation between nets.

\section{Distributed Systems}
\label{distributed systems}

The approach of Section~\ref{asynchronous} makes a difference between
a net regarded as a specification, and an asynchronous implementation
of the same net. The latter could be thought of as a way to execute
the net when a given distribution makes the synchronisations that are
inherent in the specification impossible.
In this and the following section, on the
other hand, we drop the difference between a net and its asynchronous
implementation.
Instead of adapting our intuition about the firing
rule when implementing a net in a distributed way, we insist that all
synchronisations specified in the original net remain present as
synchronisations in a distributed implementation. Yet, at the same
time we stick to the point of view that it is simply not possible for
a transition to synchronise its firing with the removal of tokens from
preplaces at remote locations. Thus we only allow distributions in
which each transition is co-located with all of its preplaces.  We
call such distributions \emph{effectual}.  For effectual distributions
$D$, the implementation transformation $I_D$ is the identity.  As a
consequence, if effectuality is part of a requirement $\quire$ imposed
on distributions, the question whether a net is $\quire$-asynchronous
is no longer dependent on whether an asynchronous implementation
mimics the behaviour of the given net, but rather on whether the net
allows a distribution satisfying $\quire$ at all.

The requirement of effectuality does not combine well will the
requirements on distributions proposed in \refdf{icerequirements}.
For if $\quire$ is the class of distributions that are effectual and
asymmetrically distributed, then only nets without transitions with
multiple preplaces would be $\quire$-asynchronous.  This rules out
most useful applications of Petri nets.  The requirement of
effectuality by itself, on the other hand, would make every net
asynchronous, because we could assign the same location to all places
and transitions.

We impose one more fundamental restriction on distributions, namely
that when two visible transitions can occur in one step, they cannot be
co-located. This is based on the assumption that at a given location
visible actions can only occur sequentially, whereas we want to
preserve as much concurrency as possible (in order not to loose
performance). Recall that in Petri nets simultaneity of transitions
cannot be enforced: if two transitions can fire in one step,
they can also fire in any order. The standard interpretation of nets
postulates that in such a case those transitions are causally
independent, and this idea fits well with the idea that they reside at
different locations.

\begin{definitionG}{df-}
  {Let $N = (S, T, F, M_0, \ell)$ be a net.}
  The \defitem{concurrency relation} $\mathord{\concurrent} \subseteq
  T^2$ is given by $t \concurrent u \equivalent t \ne u \wedge \exists
  M \inp [M_0\rangle. M [\{t,u\}\rangle$.

  $N$ is \defitem{distributed} iff it has a distribution $D$ such that
  \begin{itemise}
    \item
      $\mathrlap{\forall s \in S, ~t \in T.\hspace{1pt}s \in \precond{t}}
      \hphantom{t \concurrent u \wedge l(t),l(u) \ne \tau} \implies t \equiv_D s$,
    \item
     $t \concurrent u \wedge l(t),l(u) \ne \tau \implies t\not\equiv_D u$.
  \end{itemise}
\end{definitionG}

\noindent
It is straightforward to give a semi-structural characterisation of
this class of nets:

\begin{observation}{distributed}
A net is distributed iff there is no sequence $t_0,\ldots,t_n$ of
transitions with $t_0 \smile t_n$ and
$\precond{t_{i-1}}\cap\precond{t_{i}}\neq\emptyset$ for $i=1,\ldots,n$.
\end{observation}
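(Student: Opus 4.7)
The plan is to prove both directions by analysing, respectively constructing, the equivalence relation $\equiv_D$ witnessing distributability. Throughout I implicitly restrict to plain nets (the setting of this section), so that no transition has label $\tau$ and the visibility side condition in the second clause of the distributability definition is automatic.

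\textbf{Forward direction.} Assume $N$ is distributed with witnessing distribution $D$. Suppose, for contradiction, that a sequence $t_0,\ldots,t_n$ exists with $t_0 \smile t_n$ and places $p_i \in \precond{t_{i-1}}\cap\precond{t_i}$ for $i=1,\ldots,n$. The first clause of the distributability definition gives $t_{i-1} \equiv_D p_i \equiv_D t_i$ for each $i$, so by transitivity $t_0 \equiv_D t_n$. Since $t_0 \smile t_n$ (in particular $t_0 \ne t_n$) and both transitions are visible, the second clause yields $t_0 \not\equiv_D t_n$, a contradiction.

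\textbf{Backward direction.} Assume no such sequence exists. I would define $\equiv_D$ as the smallest equivalence relation on $S \cup T$ containing all pairs $(s,t)$ with $s \in \precond{t}$. The first clause of distributability then holds by construction. For the second clause, I claim that for any two distinct transitions $t \ne u$ with $t \equiv_D u$, there exists a sequence $t = t_0, t_1, \ldots, t_n = u$ with $\precond{t_{i-1}} \cap \precond{t_i} \neq \emptyset$. Granting this claim, any pair $t \smile u$ of distinct concurrent transitions with $t \equiv_D u$ would produce exactly the forbidden sequence, contradicting the hypothesis; hence $t \smile u$ implies $t \not\equiv_D u$, as required.

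\textbf{The claim about equivalence classes.} This is the only step requiring a short argument, and is the main (if minor) obstacle. Let $R$ be the relation on transitions defined by $t\,R\,u$ iff $t = u$ or there is such a connecting sequence; $R$ is manifestly an equivalence on $T$. Extend $R$ to $S \cup T$ by placing each place $s$ in the class of any transition $t$ with $s \in \precond{t}$ (places not in any preset go in singleton classes). One checks this is well-defined: if $s \in \precond{t} \cap \precond{u}$, then the length-one sequence $t, u$ witnesses $t\,R\,u$. The resulting equivalence contains every generator $(s,t)$ with $s \in \precond{t}$, so $\equiv_D\, \subseteq\, R$; conversely each generator of $\equiv_D$ chains these together, so $R \subseteq\, \equiv_D$. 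Thus $R = \,\equiv_D$, proving the claim and completing the proof.
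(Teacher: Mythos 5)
The paper offers no proof of this Observation at all---it is introduced with ``It is straightforward to give a semi-structural characterisation''---so there is no official argument to compare against; your proof is a correct and complete rendering of the intended reasoning. The forward direction correctly chains the first clause of the definition of distributed along the shared preplaces $p_i$ to force $t_0\equiv_D t_n$ and then contradicts the second clause via $t_0\smile t_n$; the backward direction correctly takes $\equiv_D$ to be the smallest equivalence containing $\{(s,t)\mid s\inp\precond{t}\}$, and your verification that its restriction to $T$ coincides with the shared-preplace connectivity relation $R$ is sound (the extension of $R$ to $S\cup T$ is well defined precisely because $s\inp\precond{t}\cap\precond{u}$ already puts $t$ and $u$ in one $R$-class).

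The one point worth flagging is your blanket restriction to plain nets. As stated for arbitrary nets the Observation is in fact false in the ``only if'' direction: if $t_0$ and $t_n$ are both labelled $\tau$, the second clause of the definition of distributed imposes no constraint on them, so for instance a net with $\precond{t_0}=\{p_1\}$, $\precond{t_1}=\{p_1,p_2\}$, $\precond{t_2}=\{p_2\}$, $M_0=\{p_1,p_2\}$ and $\ell(t_0)=\ell(t_2)=\tau$ is distributed (place everything on one location) yet contains the forbidden sequence. The statement has to be read with the implicit side condition $\ell(t_0),\ell(t_n)\neq\tau$. Assuming plainness makes that condition automatic, but it is too strong for how the Observation is later used: in the proof of \refthm{distreqnoconcurconfl} it is applied to a plain $\tau$-net $N'$, which may contain $\tau$-transitions. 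Since your forward argument only ever uses the visibility of the two endpoints $t_0$ and $t_n$, and your backward argument uses no visibility at all, proving the Observation under that weaker side condition costs nothing---but the scoping should be made explicit rather than resolved by assuming the whole net is plain.
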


\noindent
A structure as in the above characterisation of distributed
nets can be considered as a prolonged {\structuralM} containing two
independent transitions that can be simultaneously enabled.

It is not hard to find a plain net that is fully asynchronous, yet not
distributed. However, restricted to plain nets without dead
transitions, the class of asymmetrically asynchronous nets is a strict
subclass of the class of distributed nets. Namely, if a net is
\structuralM-free (where an {\structuralM} is as in
\refdf{reachablem}, but without the reachability condition on the
bottom line), then it surely has no sequence as described above.

\section{Distributable Systems}
\label{distributable}

\begin{figure}[tb]
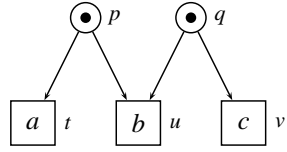

  \begin{center}
    \begin{petrinet}(6,3,4)
      \Q (2,3):p1:p;
      \Q (4,3):p2:q;
      \u (1,1):t1:a:t;
      \u (3,1):t2:b:u;
      \u (5,1):t3:c:v;

      \a p1->t1;
      \a p1->t2;
      \a p2->t2;
      \a p2->t3;
    \end{petrinet}
  \end{center}
\vspace{-1em}
  \caption{A fully marked \structuralM.}
  \label{fig-fullM}
\end{figure}

In this section, we will investigate the borderline for distributability
of systems. It is a well known fact that sometimes a global protocol
is necessary when concurrent activities in a system interfere. In
particular, this may be necessary for deciding choices in a coherent
way. Consider for example the simple net in \reffig{fullM}. It
contains an \structuralM-structure, which was already exhibited as a
problematic one in Section~\ref{asynchronous}. Transitions $t$ and $v$ are
supposed to be concurrently executable (if we do not want to restrict
performance of the system), and hence reside on different locations.
Thus at least one of them, say $t$, cannot be co-located with transition $u$.
However, both transitions are in conflict with $u$. 

As we use nets as models of reactive systems, we allow the environment
of a net to influence decisions at runtime by blocking one of the
possibilities. Equivalently we can say it is the environment that
fires transitions, and this can only happen for transitions that are
currently enabled in the net. If the net decides between $t$ and $u$
before the actual execution of the chosen transition, the environment
might change its mind in between, leading to a state of deadlock.
Therefore we work in a branching time semantics, in which the option
to perform $t$ stays open until either $t$ or $u$ occurs. Hence the
decision to fire $u$ can only be taken at the location of $u$, namely
by firing $u$, and similarly for $t$.  Assuming that it takes time to
propagate any message from one location to another, in no distributed
implementation of this net can $t$ and $u$ be simultaneously enabled,
because in that case we cannot exclude that both of them happen.
Thus, the only possible implementation of the choice between $t$ and
$u$ is to alternate the right to fire between
$t$ and $u$, by sending messages between them (cf.\ \reffig{fullMbusy}).
But if the
environment only sporadically tries to fire $t$ or $u$ it may
repeatedly miss the opportunity to do so, leading to an infinite
loop of control messages sent back and forth, without either
transition ever firing.

In this section we will formalise this reasoning, and show that under
a few mild assumptions this type of structures cannot be implemented in a
distributed manner at all, \ie even when we allow the implementation to be
completely unrelated to the specification, except for its behaviour.
For this, we apply the notion of a distributed net, as introduced in
the previous section. Furthermore, we need an equivalence notion in
order to specify in which way an implementation as a distributed net
is required to preserve the behaviour of the original net.
As in Section~3, we choose step readiness equivalence.
We call a plain net \emph{distributable}
if it is step readiness equivalent to a distributed net.
\linebreak[2]
We speak of a \textit{truly synchronous} net if it is not
distributable, thus if it may not be transformed into any
distributed net with the same behaviour up to step readiness equivalence, that
is if no such net exists. We study the concept ``distributable''
for plain nets only, but in order to get the largest class possible
we allow non-plain implementations, where a given transition may be
split into multiple transitions carrying the same label.

\begin{definition}{df-trulysync}
  A plain net $N$ is \defitem{truly synchronous} iff there exists no
  distributed net $N'$ which is step readiness equivalent to $N$.
\end{definition}

\noindent
We will show that nets like the one of \reffig{fullM} are truly
synchronous.

\begin{figure}[tb]
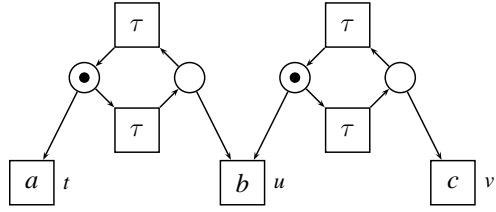

  \begin{center}
    \begin{petrinet}(10,4.5)
      \P (2,3):p;
      \p (4,3):pprime;
      \P (6,3):q;
      \p (8,3):qprime;

      \u (1,1):a:a:t;
      \u (5,1):b:b:u;
      \u (9,1):c:c:v;

      \t (3,2):ptau1:$\tau$;
      \t (3,4):ptau2:$\tau$;
      \t (7,2):qtau1:$\tau$;
      \t (7,4):qtau2:$\tau$;

      \a p->ptau1; \a ptau1->pprime; \a pprime->ptau2; \a ptau2->p;
      \a q->qtau1; \a qtau1->qprime; \a qprime->qtau2; \a qtau2->q;

      \a p->a;
      \a pprime->b;
      \a q->b;
      \a qprime->c;
    \end{petrinet}
  \end{center}
\vspace{-1em}
  \caption{A busy-wait implementation of the net in \reffig{fullM}}
  \label{fig-fullMbusy}
\end{figure}

Step readiness equivalence is one of the simplest and least
discriminating equivalences imaginable that preserves branching time,
causality and divergence to some small extend. Our impossibility
result, formalised below as \refthm{trulysyngltfullm},
depends crucially on all three properties, and thus needs to be
reconsidered when giving up on any of them.\linebreak[2]
When working in linear time semantics, every net is equivalent to an
infinite net that starts with a choice between several
$\tau$-transitions, each followed by a conflict-free net modelling a
single run. This net is $\structuralN$-free, and hence distributed.
It can be argued that infinite implementations are not acceptable, 
but when searching for the theoretical limits to distributed
implementability we don't want to rule them out dogmatically.
When working in interleaving semantics, any net can be converted into
an equivalent distributed net by removing all concurrency
between transitions. This can be accomplished by adding a new,
initially marked place, with an arc to and from every transition in
the net.
When fully abstracting from divergence, even when respecting causality
and branching time, the net of \reffig{fullM} is equivalent to the
distributed net of \reffig{fullMbusy}, and in fact it is not hard to
see that this type of implementation is possibly for any given net.
Yet, the implementation is suspect, as the implemented decision of a
choice may fail to terminate.
The clause $M \arrownot\production{\tau}$ in \refdf{readypair} is
strong enough to rule out this type of implementation, even though our
step readiness semantics abstracts from other forms of divergence.

We now characterise the class of nets which we will prove to be truly
synchronous.

\begin{definitionG}{df-fullM}
  {Let $N = (S, T, F, M_0, \ell)$ be a net.}
  $N$ has a \defitem{fully reachable visible pure \structuralM} iff
  $\exists t,u,v \in T.
  \precond{t} \cap \precond{u} \ne \varnothing \wedge
  \precond{u} \cap \precond{v} \ne \varnothing \wedge
  \precond{t} \cap \precond{v} = \varnothing \wedge\linebreak[2]
  \ell(t), \ell(u), \ell(v) \ne \tau \wedge \linebreak[1]
  \exists M \in [M_0\rangle.
  \precond{t} \cup \precond{u} \cup \precond{v} \subseteq M$.
\end{definitionG}

\noindent
Here a \emph{pure \structuralM} is an {\structuralM} as in
\refdf{reachablem} that moreover satisfies $\precond{t} \cap
\precond{v} = \varnothing$, and hence $p \not\in \precond{v}$,
$q \not\in \precond{t}$ and $t\neq v$.
These requirements follow from the conditions above.

\begin{proposition}{pr-fullmimpliesnotdistr}
  A net with a fully reachable visible pure {\structuralM} is not distributed.
\end{proposition}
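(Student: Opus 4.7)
The plan is to combine the semi-structural characterisation of distributed nets (Observation~1) with the obvious fact that the outer transitions of a fully reachable visible pure {\structuralM} are concurrent, then use the two clauses of the definition of a distributed net to derive a contradiction.

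First I would unpack the hypothesis, obtaining visible transitions $t,u,v$, a reachable marking $M\in[M_0\rangle$ with $\precond{t}\cup\precond{u}\cup\precond{v}\subseteq M$, places $p\in\precond{t}\cap\precond{u}$ and $q\in\precond{u}\cap\precond{v}$, and the purity condition $\precond{t}\cap\precond{v}=\varnothing$. Next I would verify that $t\concurrent v$, i.e.\ that $\{t,v\}$ is enabled at $M$. Since $N$ is contact-free, by the remark following the definition of a step it suffices to check that $\precond{t}\cup\precond{v}\subseteq M$ (which is immediate) and that $\precond{t}\cap\precond{v}=\varnothing$ (which is purity); these two facts also yield $t\ne v$, because presets are nonempty by the standing assumption on nets. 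Hence $M\,[\{t,v\}\rangle$, so $t\concurrent v$.

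I would then assume, for contradiction, that $N$ is distributed via some distribution $D$. The first clause of the definition of distributed forces every transition to be co-located with each of its preplaces; applying this to $p\in\precond{t}\cap\precond{u}$ gives $t\equiv_D p\equiv_D u$, and applying it to $q\in\precond{u}\cap\precond{v}$ gives $u\equiv_D q\equiv_D v$. By transitivity $t\equiv_D v$. On the other hand, from $t\concurrent v$ together with $\ell(t),\ell(v)\ne\tau$ the second clause yields $t\not\equiv_D v$, a contradiction. Equivalently, one can cite Observation~1 directly: the sequence $t_0=t,\ t_1=u,\ t_2=v$ satisfies $\precond{t_{i-1}}\cap\precond{t_i}\ne\varnothing$ for $i=1,2$ together with $t_0\concurrent t_2$ (and both endpoints carry visible labels), which is exactly the obstruction to being distributed.

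There is no real obstacle here: the only mildly delicate point is justifying concurrency of $t$ and $v$ rather than merely sequential enabledness, and this is handled by purity plus contact-freeness. Everything else is a direct application of the definitions, so the proof will be only a few lines long.
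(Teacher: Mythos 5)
Your proposal is correct and follows essentially the same route as the paper's own proof: establish $t\concurrent v$ from purity and full reachability, then chain $t\equiv_D p\equiv_D u\equiv_D q\equiv_D v$ via the co-location clause and contradict $t\not\equiv_D v$ from the concurrency clause. The only difference is that you spell out the justification of $M\,[\{t,v\}\rangle$ (via contact-freeness and nonemptiness of presets), which the paper leaves implicit.
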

\begin{proof}
  Let $N = (S, T, F, M_0, \ell)$ be a net that has a fully reachable
  visible pure \structuralM, so there exist 
  $t, u, v \in T$ and $p, q \in S$ such that
  $p \in \precond{t} \cap \precond{u} \wedge
  q \in \precond{u} \cap \precond{v} \wedge
  \precond{t} \cap \precond{v} = \varnothing$ and
  $\exists M \in [M_0\rangle.
  \precond{t} \cup \precond{u} \cup \precond{v} \subseteq M$.
  Then $t \concurrent v$.
  Suppose $N$ is distributed by the distribution $D$.
  Then $t \equiv_D p \equiv_D u \equiv_D q \equiv_D v$ but
  $t \concurrent v$ implies $t \not\equiv_D v$. \contradiction
\end{proof}

\noindent
Now we show that fully reachable visible pure {\structuralM}'s that
are present in a plain net are preserved under step readiness equivalence.

\begin{lemmaG}{lem-plainfullmimpliesreadym}
  {Let $N = (S, T, F, M_0, \ell)$ be a plain net.}
  If $N$ has a fully reachable visible pure \structuralM, there exists
  $\rpair{\sigma, X} \in \readyset(N)$ such that
  $\exists a, b, c \in \Act.\linebreak[2] a \neq c \wedge
  \{b\} \in X \wedge\linebreak[3] \{a, c\} \in X \wedge
  \{a, b\} \notin X \wedge \{b, c\} \notin X$.
  (It is implied that $a \ne b \ne c$.)
\end{lemmaG}
\begin{proof}
  $N$ has a {fully reachable visible pure \structuralM}, so
  there are $t,u,v \inp T$ and $M \inp [M_0\rangle$ such that
  $
  \precond{t} \cap \precond{u} \ne \varnothing\linebreak[2] \wedge
  \precond{u} \cap \precond{v} \ne \varnothing \wedge
  \precond{t} \cap \precond{v} = \varnothing \wedge
  \ell(t), \ell(u), \ell(v) \ne \tau \wedge
  \precond{t} \cup \precond{u} \cup \precond{v} \subseteq M$.
  Let $\sigma \in \Act^*$ such that $M_0 \Production{\sigma} M$.
  Since $N$ is a plain net, $M \arrownot\production{\tau}$ and
  $\ell(t)\neq\ell(u)\neq\ell(v)\neq\ell(t)$. Hence there exists an $X\! \subseteq \powermultiset{\Act}$
  such that $\rpair{\sigma, X}\! \in\! \readyset(N) \wedge \{\ell(u)\} \inp X
  \wedge\linebreak[2] \{\ell(t), \ell(v)\} \in X \wedge \{\ell(t), \ell(u)\} \notin X \wedge
  \{\ell(u), \ell(v)\} \notin X$.  \end{proof}

\begin{lemmaG}{lem-readymimpliesfullm}
  {Let $N = (S, T, F, M_0, \ell)$ be a net.}
  If there exists $\rpair{\sigma, X} \in \readyset(N)$ such that
  $\exists a, b, c \in \Act. a \neq c \wedge
  \{b\} \in X \wedge \{a, c\} \in X \wedge
  \{a, b\} \notin X\linebreak[2] \wedge \{b, c\} \notin X$,
  then $N$ has a fully reachable visible pure \structuralM.
\end{lemmaG}
\begin{proof}
  Let $M \subseteq S$ be the marking which gave rise to the step ready pair
\vspace{-2pt}
  $\rpair{\sigma, X}$, \ie $M_0 \Production{\sigma} M$ and
  $M \production{\{b\}} \wedge\, M \production{\{a,c\}} \wedge\,
  M \arrownot\production{\{a,b\}} \wedge\, M \arrownot\production{\{b,c\}}$.

  As $a \ne b \ne c \ne a$ there must exist three transitions $t, u, v \in T$
  with $\ell(t) = a\linebreak[2] \wedge \ell(u) = b \wedge \ell(v) = c$ and
  $M [\{u\}\rangle \wedge M [\{t,v\}\rangle \wedge
  \neg(M[\{t,u\}\rangle) \wedge \neg(M[\{u,v\}\rangle)$.
  From $M[\{u\}\rangle \wedge M[\{t,v\}\rangle$ follows
  $\precond{t} \cup \precond{u} \cup \precond{v} \subseteq M$.
  From $M[\{t,v\}\rangle$ follows $\precond{t} \cap \precond{v} = \varnothing$.
  From $\neg(M[\{t,u\}\rangle)$ then follows $\precond{t} \cap \precond{u} \ne
  \varnothing$ and analogously for $u$ and $v$.
  Hence $N$ has a fully reachable visible pure \structuralM.
\end{proof}

\noindent
Note that the lemmas above give a behavioural property that for plain
nets is equivalent to having a fully reachable visible pure \structuralM.

\begin{theorem}{thm-trulysyngltfullm}
  A plain net with a fully reachable visible pure {\structuralM} is
  truly synchronous.
\end{theorem}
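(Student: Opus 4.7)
The plan is to reduce the theorem to the three preceding ingredients — \reflem{plainfullmimpliesreadym}, \reflem{readymimpliesfullm} and \refpr{fullmimpliesnotdistr} — by translating the presence of a fully reachable visible pure \structuralM{} into a purely behavioural signature in the step readiness semantics, which any equivalent net must reproduce, and then reading that signature back as a structural \structuralM{} in the candidate distributed implementation.

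More concretely, I would first invoke \reflem{plainfullmimpliesreadym} on the plain net $N$ to extract a step ready pair $\rpair{\sigma,X} \in \readyset(N)$ together with three action labels $a, b, c \in \Act$ satisfying $a \neq c$, $\{b\} \in X$, $\{a,c\} \in X$, $\{a,b\} \notin X$ and $\{b,c\} \notin X$. I would then argue by contradiction: suppose $N$ is not truly synchronous, so that some distributed net $N'$ satisfies $N \approx_\mathscr{R} N'$. Step readiness equivalence places the same pair $\rpair{\sigma,X}$ into $\readyset(N')$. Applying \reflem{readymimpliesfullm} to $N'$ — this lemma is stated for arbitrary nets, not only plain ones — yields a fully reachable visible pure \structuralM{} inside $N'$. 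But \refpr{fullmimpliesnotdistr} forbids exactly this for any distributed net, contradicting the choice of $N'$.

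I do not anticipate a genuinely hard obstacle, since every essential combinatorial step has already been isolated in the supporting lemmas. The one subtlety worth double-checking is that \reflem{readymimpliesfullm} really does apply to a possibly non-plain $N'$: its argument only needs transitions $t, u, v$ with $\ell(t)=a$, $\ell(u)=b$ and $\ell(v)=c$, and because the three labels are pairwise distinct and visible (the inequalities $a\neq b$ and $b\neq c$ follow from comparing $\{a,b\}$ and $\{b,c\}$ against $\{b\}$ inside the ready set, using that ready sets are sets of multisets of multiplicity at most one), these transitions are automatically distinct and the remainder of the preset analysis in that lemma is entirely label-agnostic. Once this observation is recorded, the three-step chain above closes the argument.
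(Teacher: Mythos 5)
Your proposal is correct and follows exactly the paper's own argument: extract the behavioural signature via \reflem{plainfullmimpliesreadym}, transfer it to any step readiness equivalent net $N'$, read back a fully reachable visible pure \structuralM{} in $N'$ via \reflem{readymimpliesfullm} (which is indeed stated for arbitrary nets), and conclude with \refpr{fullmimpliesnotdistr}. Your extra check that the three transitions in $N'$ are distinct even without plainness is a sensible observation but not a departure from the paper's route.
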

\begin{proof}
  Let $N$ be a plain net which has a fully reachable visible pure \structuralM.
  Let $N'$ be a net which is step readiness equivalent to $N$.
  By \reflem{plainfullmimpliesreadym} and \reflem{readymimpliesfullm}, also
  $N'$ has a fully reachable visible pure \structuralM.
  By \refpr{fullmimpliesnotdistr}, $N'$ is not distributed. Thus $N$ is
  truly synchronous.
\end{proof}

\noindent
\refthm{trulysyngltfullm} gives an upper bound of the class of
distributable nets. We conjecture that this upper bound is tight, and
a plain net is distributable iff it has no fully reachable
  visible pure \structuralM.

\begin{conjecture}{cj-fullmlttrulysync}
  \hspace{-2.5pt}A plain net is truly synchronous iff it has a fully reachable
  visible pure \structuralM.$\!$
\end{conjecture}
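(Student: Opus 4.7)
The proof of Theorem~\ref{thm-trulysyngltfullm} already establishes the ``only if'' direction. What remains is the converse: every plain net $N$ without a fully reachable visible pure \structuralM{} must be distributable, i.e.\ step readiness equivalent to some distributed net $N'$.

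My plan is to give an explicit construction of such an $N'$. For every transition $t$ of $N$ and every preplace $p \in \precond{t}$, I would introduce a private staging place $s_{p,t}$ together with a silent transition $\tau_{p,t}$ that moves a token from $p$ to $s_{p,t}$; the implementing copy of $t$ then fires synchronously on the staging places $\{s_{p,t} \mid p \in \precond{t}\}$. Transitions (and their staging places) whose original preplaces are shared with other visible transitions are grouped into \emph{clusters} and co-located, while transitions with disjoint conflict neighbourhoods are placed at separate locations. The distribution $D'$ of $N'$ puts each $\tau_{p,t}$ together with $p$, and puts the implementing copy of $t$ together with all its staging places.

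The correctness proof would then proceed in two parts. First, one shows that $D'$ is a legal distribution: effectuality of $D'$ is immediate from the construction, and no two concurrent visible transitions are co-located because the hypothesis excludes the only pattern~--- a fully reachable visible pure \structuralM~--- that would force two concurrent visible transitions into the same cluster. Second, one shows $N \approx_\mathscr{R} N'$ by relating each reachable marking $M$ of $N$ to those markings of $N'$ in which staging places faithfully reflect the ``in-flight'' tokens, and verifying that (i) every visible step of $N$ is matched by a visible step of $N'$ after finitely many $\tau$-moves, (ii) every visible step of $N'$ corresponds to a visible step of $N$, and (iii) stable markings on either side (those with $M \arrownot\production{\tau}$) yield identical ready sets.

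The principal obstacle is twofold. Conflict resolution must be implemented without introducing divergence: the busy-wait gadget of~\reffig{fullMbusy} illustrates how a naive protocol violates the stability clause $M \arrownot\production{\tau}$ in~\refdf{readypair}, so any conflict manager one inserts at a shared preplace must be a finite, terminating sub-protocol whose $\tau$-closure always reaches a stable marking enabling exactly the visible steps of the corresponding marking of $N$. More subtly, one must handle \emph{chains} of shared preplaces $t_0,t_1,\dots,t_n$ longer than the length-$2$ chain appearing in the \structuralM{} definition; even with no fully reachable visible pure \structuralM, two concurrent visible transitions may still be linked by such a chain (cf.\ Observation~\ref{obs-distributed}), and the construction must be able to split chains, perhaps by iterating the staging-place trick and allowing the implementing copy of an intermediate transition to be divided across clusters when its preplaces belong to incompatible ones. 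Showing that every such chain can be broken without losing step readiness equivalence is, in my view, the crux of the conjecture and the reason why a full proof has so far resisted a short presentation.
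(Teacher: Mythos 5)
There is no proof of this statement in the paper to compare yours against: it is stated as a conjecture, and the authors say explicitly in the conclusion that they are still working on it. The paper proves only one direction, namely \refthm{trulysyngltfullm}: a plain net \emph{with} a fully reachable visible pure {\structuralM} is truly synchronous. (Incidentally, that is the ``if'' direction of the biconditional, not the ``only if'' direction as you write.) Your text for the remaining direction is a plan rather than a proof --- you say so yourself --- but beyond being incomplete, the plan as primarily described runs into a provable obstruction.

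Your main construction --- staging places $s_{p,t}$, silent transfer transitions $\tau_{p,t}$, and a \emph{single} implementing copy of each visible transition, with conflicting transitions grouped into co-located clusters --- produces a plain $\tau$-net. By \refthm{distreqnoconcurconfl}, a plain net admits a distributed plain $\tau$-net implementation iff $\mathord{\conflict^*} \cap \mathord{\concurrent} = \varnothing$, and this condition is strictly stronger than the absence of a fully reachable visible pure \structuralM: a chain $t_0 \conflict t_1 \conflict \cdots \conflict t_n$ with $t_0 \concurrent t_n$ need not contain any consecutive triple $t_{i-1},t_i,t_{i+1}$ whose presets are jointly covered by a single reachable marking and satisfy $\precond{t_{i-1}} \cap \precond{t_{i+1}} = \varnothing$, which is what the {\structuralM} definition demands. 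For a net in this gap \emph{no} plain $\tau$-net implementation is distributed, so your claim that no two concurrent visible transitions end up co-located ``because the hypothesis excludes the only pattern that would force'' this is false: closing the clusters under the enabled conflict relation will co-locate $t_0$ and $t_n$. You notice this yourself in your final paragraph, which contradicts the correctness claim made two paragraphs earlier. The only escape is the device you mention in passing --- splitting the implementing copy of a transition into several equally-labelled copies at different locations --- and making that work without introducing divergence (the clause $M \arrownot\production{\tau}$ of \refdf{readypair} rules out busy-waiting as in \reffig{fullMbusy}) is precisely the unsolved content of the conjecture. As it stands, your construction would at best re-derive the ``$\Leftarrow$'' half of \refthm{distreqnoconcurconfl}, which the paper already has, and does not settle the statement.
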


\noindent
In the following, we give a lower bound of distributability by providing
a protocol to implement certain kinds of plain nets distributedly.
These implementations do not add additional labelled transitions, but
only provide the existing ones with a communication protocol in the
form of $\tau$-transitions. Hence these implementations pertain to a
notion of distributability in which we restrict implementations to be
plain $\tau$-nets. Note that this does not apply to the impossibility
result above.

\begin{definition}{df-distributable}
  A plain net $N$ is \defitem{plain-distributable} iff there exists
  a distributed plain $\tau$-net $N$ which is step readiness equivalent to $N$.
\end{definition}

\begin{definitionG}{df-conflconcur}
  {Let $N = (S, T, F, M_0, \ell)$ be a net.}
  We define the
  \defitem{enabled conflict relation} $\mathord{\conflict} \subseteq T^2$ as
  $$t \conflict u \equivalent \exists M \in [M_0\rangle. M [\{t\}\rangle \wedge
  M [\{u\}\rangle \wedge \neg (M [\{t, u\}\rangle).$$
\vspace{-2em}
\end{definitionG}

\begin{figure}[tb]
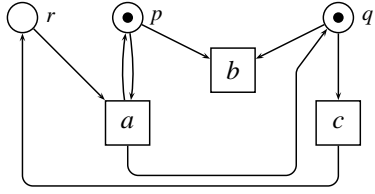

  \begin{center}
    \begin{petrinet}(8,4.4)
      \q (1,4):r:r;
      \Q (3,4):p:p;
      \Q (7,4):q:q;

      \t (3,2):a:a;
      \t (5,3):b:b;
      \t (7,2):c:c;

      \a r->a;
      \A p->a;
      \A a->p;
      \a p->b;
      \a q->b;
      \a q->c;
      \av a[-90]-(3,1)(6.2,1)(6.2,3)->[-135]q;
      \av c[-90]-(7,0.8)(1,0.8)->[-90]r;
    \end{petrinet}
  \end{center}
\vspace{-1.8em}
  \caption{An example net}
  \label{fig-distrorig}
\end{figure}

\begin{figure}[tb]
\begin{center}
\hspace{-1.1em}
  \begin{petrinet}(17,14.2)
    \q (2,12):r:r;
    \u (2,10):rsqr:$\tau$:$\boxed{r}$;
    \q (2,8):rla:$r^{[a]}$;
    \t (4,6):a:a;
    \q (4,4):acirc:$\circled{a}$;
    \u (4,2):aprime:$\tau$:$a'$;

    \Q (8,12):p:p;
    \u (8,10):psqr:$\tau$:$\boxed{p}$;
    \q (6,8):pla:$p^{[a]}$;
    \q (10,8):plb:$p^{[b]}$;

    \Q (14,12):q:q;
    \u (14,10):qsqr:$\tau$:$\boxed{q}$;
    \q (14,8):qlb:$q^{[b]}$;
    \t (12,6):b:b;
    \q (12,4):bcirc:$\circled{b}$;
    \u (12,2):bprime:$\tau$:$b'$;

    \t (16,6):c:c;
    \q (16,4):ccirc:$\circled{c}$;
    \u (16,2):cprime:$\tau$:$c'$;

    \u (6,4.5):bpla:$\tau$:$b_p^{[a]}$;
    \q (7,6):pbla:$\!p_b^{[a]}$;
    \q (6,3):palbprime:$\;\overline{p}_a^{[b]}$;
    \u (10,4.5):aplb:$\tau$:$a_p^{[b]}$;
    \q (9,6):palb:$\!p_a^{[b]}$;
    \q (10,3):pblaprime:$\;\overline{p}_b^{[a]}$;

    \a r->rsqr;
    \a rsqr->rla;
    \a rla->a;
    \a a->acirc;
    \a acirc->aprime;

    \a p->psqr;
    \a psqr->pla;
    \a psqr->plb;
    \a pla->a;
    \a plb->b;
    \a qlb->b;
    \a b->bcirc;
    \a bcirc->bprime;

    \a q->qsqr;
    \a qsqr->qlb;
    \a qlb->c;
    \a c->ccirc;
    \a ccirc->cprime;

    \a pla->bpla;
    \a pbla->bpla;
    \a bpla->pblaprime;
    \av b[-110]-(11.5,5)(9,7)->[45]pbla;
    \a pblaprime->bprime;

    \a plb->aplb;
    \a palb->aplb;
    \a aplb->palbprime;
    \av a[-70]-(4.5,5)(7,7)->[135]palb;
    \a palbprime->aprime;

    \av aprime[-80]-(4,1.3)(17,1.3)(17,12.8)(14,12.8)->[90]q;
    \av aprime[-110]-(3.8,1.1)(17.2,1.1)(17.2,13)(8,13)->[90]p;
    \av cprime[-90]-(16,0.7)(1,0.7)(1,13)(2,13)->[90]r;

    {
      \psset{linestyle=dashed}
      \psline(0.3,9)(17.7,9)
      \psline(4,9)(4,14)
      \psline(12,9)(12,14)
      \psline(8,9)(8,0)
    }
  \end{petrinet}
  \end{center}
  \caption{A distributed implementation for the net in \reffig{distrorig},
  partitioning into localities shown by dashed lines}
  \label{fig-distrimpl}
\end{figure}

\noindent
We now propose the following protocol for implementing nets.
An example depicting it can be found in \reffig{distrimpl}.
As locations we take the places in a given net,
 and the equivalence classes of
transitions that are related by the reflexive and transitive closure
of the enabled conflict relation. We locate every transition $t$ in
its equivalence class, whereas every place gets a private location.
Every place $s$ will have an embassy $s^{[t]}$ in every location $[t]$
where one of its posttransitions $t\in \postcond{s}$ resides.
As soon as $s$ receives a token, it will distribute this information
to its posttransitions by placing a token in each of these embassies.
The arc from $s$ to $t$ is now replaced by an arc from $s^{[t]}$ to
$t$, so if $t$ could fire in the original net it can also fire in the
implementation. So far the construction allows two transitions in
different locations that shared the precondition $s$ to fire
concurrently, although they were in conflict in the original net.
However, if this situation actually occurs, these transitions would
have been in an enabled conflict, and thus assigned to the same location.
The rest of the construction is a matter of garbage collection.
If a transition $t$ fires, for each of its preplaces $s$, all
tokens that are still present in the various embassies
of $s$ in locations $[u]$ need to be removed from there. This is done
by a special internal transition \plat{$t_s^{[u]}$}. Once all these
transitions (for the various choices of $s$ and $[u]$) have fired,
an internal transition $t'$ occurs, which puts tokens in all the
postplaces of $t$.

\begin{definitionG}{df-transctrlchoice}
  {Let $N = (S, T, F, M_0, \ell)$ be a net.}
  Let $[t] := \{u \in T \mid t \conflict^* u\}$.
  The transition-controlled-choice implementation of $N$ is defined to be
  the net $N' := (S \cup S^\tau, T \cup T^\tau, F', M_0, \ell')$ with
  \begin{align*}
    S^\tau := {}& \{s^{[t]} \mid s \in S, t \in \postcond{s}\} \cup
      \{\circled{t} \mid t \in T\} \cup {}\\
      &\{s_t^{[u]}, \overline{s}_t^{[u]} \mid
      s \inp S,~ t, u \inp \postcond{s}, [u] \ne [t]\} \\
    T^\tau := {}& \{\boxed{s} \mid s \in S\} \cup
      \{t' \mid t \in T\} \cup {}\\
      &\{t_s^{[u]} \mid s \inp S,~ t, u \inp \postcond{s}, [u] \ne [t]\} {}\\
    F' := {}& \{(s, \boxed{s}) \mid s \in S\} \cup {}\\
      &\{(\boxed{s}, s^{[t]}), (s^{[t]}, t) \mid
       s \in S, t \in \postcond{s}\} \cup {}\\
      &\{(t, \circled{t}), (\circled{t}, t') \mid t \in T\} \cup {}\\
      &\{(t', s) \mid t \in T, s \in \postcond{t}\} \cup {}\\
      &\{(t, s_t^{[u]}), (s_t^{[u]}\!, t_s^{[u]}), (t_s^{[u]}\!,
      \overline{s}_t^{[u]}), (\overline{s}_t^{[u]}\!, t'), (s^{[u]}\!,
      t_s^{[u]}) \mid s \inp S,~ t,u \inp \postcond{s},~ [u] \ne [t]\}
  \end{align*}
  $\ell' \restrictedto T = \ell$ and $\ell'(T^\tau) = \{\tau\}$.
\vspace{-1ex}
\end{definitionG}

\begin{theorem}{thm-distreqnoconcurconfl}
  A plain net $N$ is plain-distributable iff
  $\mathord{\conflict^*} \cap \mathord{\concurrent} = \varnothing$.
\end{theorem}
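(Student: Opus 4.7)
For the forward direction $(\Rightarrow)$, I would argue by contradiction. Suppose $N$ is plain-distributable via some distributed plain $\tau$-net $N' \approx_\mathscr{R} N$, and assume there exists a chain $t_0 \conflict t_1 \conflict \cdots \conflict t_n$ in $N$ together with $t_0 \concurrent t_n$. Since $N$ is plain and $N'$ is a plain $\tau$-net, each visible label in $N'$ is carried by a unique transition, so $\ell(t_i)$ determines a unique $t_i^* \in T_{N'}$. For each $t_i \conflict t_{i+1}$ (and for $t_0 \concurrent t_n$) there is a reachable witnessing marking in $N$; plainness of $N$ makes it automatically $\tau$-stable, so it gives rise to a ready pair which by step readiness equivalence also lies in $\readyset(N')$. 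By uniqueness of labels in $N'$, the step $\{\ell(t_i),\ell(t_{i+1})\}$ (respectively $\{\ell(t_0),\ell(t_n)\}$) can only be realized by firing $t_i^*$ and $t_{i+1}^*$ (respectively $t_0^*$ and $t_n^*$); this gives $t_i^* \conflict t_{i+1}^*$ in $N'$ for each $i$, and $t_0^* \concurrent t_n^*$ in $N'$.

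The main auxiliary observation is that \emph{in any contact-free net, $t \conflict u$ implies $\precond{t} \cap \precond{u} \neq \varnothing$}. Indeed, if the joint step failed only because of a shared postplace $s \in \postcond{t} \cap \postcond{u}$ with $\precond{t} \cap \precond{u} = \varnothing$, firing $t$ alone from the witnessing marking $M$ would reach $M' = (M \setminus \precond{t}) \cup \postcond{t}$, at which $\precond{u} \subseteq M'$ while $s \in (M' \setminus \precond{u}) \cap \postcond{u}$, contradicting contact-freeness. Applied to $N'$, each $t_i^* \conflict t_{i+1}^*$ is witnessed by a shared preplace, so the distributedness of $N'$ forces $t_i^* \equiv_D t_{i+1}^*$; transitivity yields $t_0^* \equiv_D t_n^*$, contradicting the clause that $t_0^* \concurrent t_n^*$ with $\ell'(t_0^*),\ell'(t_n^*) \neq \tau$ forces $t_0^* \not\equiv_D t_n^*$.

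For the backward direction $(\Leftarrow)$, I would take $N' := \mathrm{TCC}(N)$ of \refdf{transctrlchoice} and equip it with the distribution $D$ whose locations are the $\conflict^*$-equivalence classes $[t]$ of transitions together with one private location per original place. Concretely, every original transition $t$ and its satellites $t'$, $t_s^{[u]}$, $\overline{s}_t^{[u]}$, $\circled{t}$ are placed at $[t]$, every embassy $s^{[t]}$ at $[t]$, and every original place $s$ together with its guard $\boxed{s}$ at location $s$. A direct check of $F'$ confirms $s \in \precond{\theta} \Rightarrow s \equiv_D \theta$ for every transition $\theta$ of $N'$. For the concurrency clause, two concurrent visible transitions of $N'$ are necessarily original transitions and (by the simulation sketched below) concurrent in $N$ too; by the hypothesis $\conflict^* \cap \concurrent = \varnothing$ they are then not $\conflict^*$-related, so they lie in distinct classes and at distinct locations.

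It remains to establish $N' \approx_\mathscr{R} N$. I would exhibit a correspondence between reachable markings $M$ of $N$ and the $\tau$-stable reachable markings $\widehat M := \{s^{[t]} \mid s \in M,\, t \in \postcond{s}\}$ of $N'$, and verify that the visible steps enabled from $\widehat M$ in $N'$ are exactly those enabled from $M$ in $N$: a step $G$ at $M$ has its members in pairwise distinct $\conflict^*$-classes (by hypothesis, since pairwise concurrent), so their $N'$-preplaces $s^{[t]}$ are disjoint; once $G$ fires at $\widehat M$, the subsequent cleanup transitions $t_s^{[u]}$, commit transitions $t'$ for $t \in G$, and guard transitions $\boxed{s}$ for $s \in \postcond{G}$ fire deterministically without divergence, landing in the stable marking $\widehat{M'}$ corresponding to the $N$-successor $M'$ of $M$. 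The main obstacle is to show that \emph{every} reachable $\tau$-stable marking of $N'$ has the form $\widehat M$ and that no $\tau$-divergence occurs along the way, so that every ready pair of $N'$ matches exactly one of $N$ in accord with the clause $M \arrownot\production{\tau}$ of \refdf{readypair}.
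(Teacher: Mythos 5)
Your overall strategy coincides with the paper's on both directions: for ``$\Rightarrow$'' you transfer the $\conflict$-chain and the $\concurrent$-pair from $N$ to the alleged implementation $N'$ via step ready pairs and uniqueness of visible labels, and then contradict distributedness using the fact that in a contact-free net an enabled conflict forces a shared preplace (this is precisely the content of the paper's appeal to Observation~\ref{distributed}, together with the remark in Section~\ref{basic} that for contact-free nets independence reduces to disjointness of presets); for ``$\Leftarrow$'' you use the transition-controlled-choice implementation of \refdf{transctrlchoice}. The forward direction is correct as you state it.

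In the backward direction, however, the explicit location assignment you give is wrong, and the ``direct check'' you claim would in fact fail. You place the garbage-collection transition $t_s^{[u]}$ at location $[t]$, but by \refdf{transctrlchoice} its preset is $\iprecond{t_s^{[u]}} = \{s_t^{[u]}, s^{[u]}\}$, and the embassy $s^{[u]}$ is forced to reside at $[u]$ (it is a preplace of $u$), with $[u]\neq[t]$; so the clause requiring every transition to be co-located with all of its preplaces is violated at $t_s^{[u]}$. The repair, visible in the paper's \reffig{distrimpl}, is to locate both $t_s^{[u]}$ and its trigger place $s_t^{[u]}$ at $[u]$ --- the cleanup of the embassy of $s$ at location $[u]$ is performed \emph{at} $[u]$, triggered by the token that $t$ deposits remotely into $s_t^{[u]}$ --- while the acknowledgement place $\overline{s}_t^{[u]}$ stays with $t'$ and $\circled{t}$ at $[t]$. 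Since postplaces need not be co-located with the transitions producing them, the cross-location arcs $(t,s_t^{[u]})$ and $(t_s^{[u]},\overline{s}_t^{[u]})$ are harmless, and the check then goes through. Beyond this, note that the claim $N'\approx_\mathscr{R} N$ for the transition-controlled-choice implementation, which you leave as a sketch, is where essentially all of the technical work lies: the paper devotes Appendix~\ref{transition-controlled-choice} to it, via a branching-bisimulation-style transfer lemma (\reflem{branching-bisimulation}) and a seven-clause invariant $\beta$ on the reachable markings of the implementation. Your intended stable markings $\{s^{[t]}\mid s\in M,\ t\in\postcond{s}\}$ do agree with the markings of distance $0$ identified there, so the plan is on the right track, but the assertions that every reachable $\tau$-stable marking has this form and that the cleanup phase terminates are exactly what that invariant is needed to establish.
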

\begin{proof}
  ``$\Rightarrow$'':
  When implementing a plain net $N$ by a plain $\tau$-net $N'$ that is step readiness
  equivalent to $N$, the $\conflict$ and $\concurrent$ relations
  between the transitions of $N$ also exists between the corresponding
  visible transitions of $N'$. This is easiest to see when writing
  $a_N$, resp.\ $a_{N'}$, to denote a transition in $N$, resp.\ $N'$,
  with label $a$, which must be unique since $N$ is a plain net,
  resp.\ $N'$ a plain $\tau$-net.
  Namely if $a_N \!\conflict b_N$, then $N$ has a step ready pair
  $\rpair{\sigma,X}$ with $\{a\},\!\{b\}\inp X$ but
  $\{a,b\}\mathbin{\not\in} X$.
  This must also be a step ready pair of $N'$, and hence $a_{N'} \conflict
  b_{N'}$. Likewise, $a_N \concurrent b_N$ implies $a_{N'} \concurrent
  b_{N'}$.

Thus if $\mathord{\conflict^*} \cap \mathord{\concurrent} \ne \varnothing$
holds in $N$, then the same is the case for $N'$, and hence $N'$ is not
distributed by Observation~\ref{distributed}.

  ``$\Leftarrow$'':
  If $\mathord{\conflict}^* \cap \mathord{\concurrent} = \varnothing$,
  $N$ can be implemented as specified in \refdf{transctrlchoice}.
  In fact, the transition-controlled-choice implementation of any net $N$
  yields a net that is step readiness equivalent to $N$.
  See Appendix~\ref{transition-controlled-choice} for a formal proof
  of this claim. By construction, if $N$ is plain, its
  transition-controlled-choice implementation is a plain $\tau$-net.
  Moreover, if
  $\mathord{\conflict}^* \cap \mathord{\concurrent} = \varnothing$ it
  never happens that concurrent visible transitions are co-located,
  and hence the implementation will be distributed.
\pagebreak[3]
\end{proof}

\noindent
Our definition of distributed nets only enforces concurrent actions
to be on different locations if they are visible, and
our implementation in \refdf{transctrlchoice} produces nets which
actually contain concurrent unobservable activity at the same location.
If this is undesired it can easily be amended by adding a single marked
place to every location and connecting that place to every transition
on that location by a self-loop. While this approach will introduce new causality
relations, step readiness equivalence will not detect this.

\section{Conclusion}

In this paper, we have characterised different grades of
asynchrony in Petri nets in terms of structural and behavioural
properties of nets.
Moreover, we have given both an upper and a lower bound of distributability
of behaviours. In particular we have shown that some branching-time
behaviours cannot be exhibited by a distributed system.

We did not consider connections from transitions to their postplaces
as relevant to determine asynchrony and distributability. This is
because we only discussed contact-free nets where no synchronisation
by postplaces is necessary. In the spirit of \refdf{fsi} we could
insert $\tau$-transitions on any or all arcs from transitions to their
postplaces, and the resulting net would always be equivalent to the original.

We have already given a short overview on related work in the
introduction of this paper.  Most closely related to our approach are
several lines of work using Petri nets as a model of reactive systems.

As mentioned in Section~\ref{asynchronous}, classes of nets with
certain structural properties like \textit{free choice nets}
\cite{best83freesimple,bes87} and \textit{simple nets}
\cite{best83freesimple}, as well as extensions of theses classes, have
been extensively studied in Petri net theory, and are closely related
to the classes of nets defined here. In \cite{best83freesimple},
Eike Best and Mike Shields introduce various transformations between
free choice nets, simple nets and extended variants thereof. They use
``essential equivalence'' to compare the behaviour of different nets,
which they only give informally. This equivalence is insensitive to
divergence, which is relied upon in their transformations. It
also does not preserve concurrency, which makes it possible to
implement \emph{behavioural free choice nets}, that may feature a fully
reachable visible $\structuralM$, as free choice nets.  They continue to show
conditions under which liveness can be guaranteed for many of these classes.

In \cite{aalst98beyond}, Wil van der Aalst, Ekkart Kindler and J\"org Desel
introduce two extensions to extended simple nets, by excluding self-loops from
the requirements imposed on extended simple nets.
This however assumes a kind of ``atomicity'' of
self-loops, which we did not allow in this paper. In particular we do not
implicitly assume that a transition will not change the state of a place it
is connected to by a self-loop, since in case of deadlock, the temporary
removal of a token from such a place might not be temporary indeed.

In \cite{reisig82buffersync}, Wolfgang Reisig introduces a class of systems
which communicate using buffers and where the relative speeds of different
components are guaranteed to be irrelevant. The resulting nets are
simple nets. He then proceeds introducing a decision procedure for the problem
whether a marking exists which makes the complete system live.

Dirk Taubner has in \cite{taubner88zurverteiltenimpl} given various
protocols by which to implement arbitrary Petri nets in the OCCAM programming
language. Although this programming language offers synchronous communication
he makes no substantial use of that feature in the protocols, thereby
effectively providing an asynchronous implementation of Petri nets. He does not
indicate a specific equivalence relation, but is effectively using
linear-time equivalences to compare implementations to the specification.

The work most similar to our approach we have found is the one by Hopkins,
\cite{hopkins91distnets}. 
There he already classified nets
by whether they are implementable by a net distributed among different locations.
He uses an interleaving equivalence to compare an implementation to the
original net, and while allowing a range of implementations, he
does require them to inherit some of the structure of the original net.
  The net classes he describes in his paper are larger than those of
Section \ref{asynchronous} because he allows more general interaction
patterns, but they are incomparable with those of Section
\ref{distributable}. One direction of this inequality depends on his
choice of interleaving semantics, which allows the implementation in
\reffig{hopkins-added-concurrency}. The step readiness equivalence we use
does not tolerate the added concurrency and the depicted
net is not distributable in our sense.
The other direction of the inequality stems from the fact that we allow
implementations which do not share structure with the specification but only
emulate its behaviour. That way, the net in \reffig{distr-not-hopkins} can be
implemented in our approach as depicted.

\begin{figure}[tb]
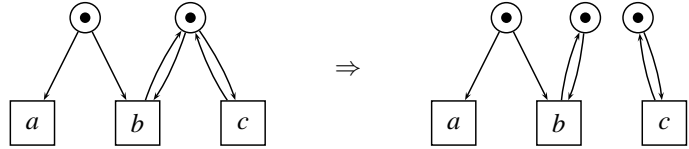

  \begin{center}
    \begin{petrinet}(14,3.4)
      \P (2,3):p;
      \P (4,3):q;

      \t (1,1):a:a;
      \t (3,1):b:b;
      \t (5,1):c:c;
      
      \a p->a; \a p->b;
      \A b->q; \A q->b; \A c->q; \A q->c;

      \rput(7,2){\Large $\Rightarrow$}

      \P (10,3):implp;
      \P (11.5,3):implqb;
      \P (12.5,3):implqc;
      
      \t (9,1):impla:a;
      \t (11,1):implb:b;
      \t (13,1):implc:c;

      \a implp->impla; \a implp->implb;
      \A implqb->implb; \A implb->implqb;
      \A implqc->implc; \A implc->implqc;
    \end{petrinet}
  \end{center}
\vspace{-1em}
  \caption{A specification and its Hopkins-implementation which added concurrency.}
  \label{fig-hopkins-added-concurrency}
\end{figure}

\begin{figure}
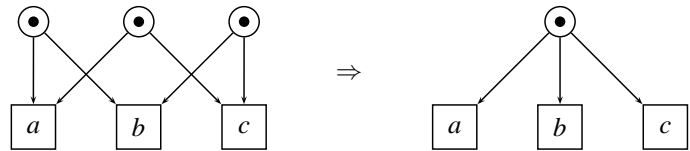

  \begin{center}
    \begin{petrinet}(14,3.4)
      \P (1,3):p;
      \P (3,3):q;
      \P (5,3):r;

      \t (1,1):a:a;
      \t (3,1):b:b;
      \t (5,1):c:c;

      \a p->a; \a p->b;
      \a q->a; \a q->c;
      \a r->b; \a r->c;

      \rput(7,2){\Large $\Rightarrow$}

      \P (11,3):pimpl;

      \t (9,1):aimpl:a;
      \t (11,1):bimpl:b;
      \t (13,1):cimpl:c;

      \a pimpl->aimpl; \a pimpl->bimpl; \a pimpl->cimpl;
    \end{petrinet}
  \end{center}
\vspace{-1em}
  \caption{A distributable net which is not considered distributable in
    \cite{hopkins91distnets}, and its implementation.}
  \label{fig-distr-not-hopkins}
\end{figure}

Still many open questions remain.
While our impossibility result holds even when allowing labelled nets as
implementations, our characterisation in \refthm{distreqnoconcurconfl}
only considers unlabelled
ones. This begs the question which class of nets can be implemented using
labelled nets. We conjecture that a distributed implementation exists for every
net which has no fully reachable visible pure \structuralM.
We also conjecture that if we allow
linear time correct implementations, all nets become distributable,
even when only allowing finite implementations of finite nets.
We are currently working on both problems.

Just as a distributable net is defined as a net that is
behaviourally equivalent to, or implementable by, a distributed net,
one could define an \emph{asynchronously implementable} net as one
that is implementable by an asynchronous net. This concept
is again parametrised by the choice of an interaction pattern.
It would be an interesting quest to characterise the various classes
of asynchronously implementable plain nets.

Also, extending our work to nets that are not required to be 1-safe
will probably generate interesting results, as conflict resolution
protocols must keep track of which token they are currently resolving
the conflict of.

In regard to practical applicability of our results, it would be very
interesting to relate our Petri net based terminology to hardware
descriptions in chip design. Especially in modern multi-core architectures 
performance reasons often prohibit using global clocks while a fa\c cade of
synchrony must still be upheld in the abstract view of the system.

On a higher level of applications, we expect our results to be useful
for language design. To start off, we would like to make a thorough
comparison of our results to those on communication patterns in
process algebras, versions of the $\pi$-calculus and I/O-automata
\cite{lynch96}. Using a Petri
net semantics of a suitable system description language, we could
compare our net classes to the class of nets expressible in the
language, especially when restricting the allowed communication
patterns in the various ways considered in~\cite{boer91embedding} or
in \cite{lynch96}.
Furthermore, we are interested in applying our results to graphical
formalisms for system design like UML sequence diagrams or activity
diagrams, also by applying their Petri net semantics. Our results
become relevant when such formalisms are used for the design of
distributed systems. Certain choice constructs become problematic
then, as they rely on a global mechanism for consistent choice
resolution; this could be made explicit in our framework.

\newpage
\appendix
\section{The Asynchronous Implementation}
\label{app-asynchronous}

Given a net $N$ and a distribution $D$ on $N$, this appendix explores
the properties of the $D$-based asynchronous implementation $I_D(N)$
of $N$, focussing on the relationship between $I_D(N)$ and $N$, and
culminating in the proofs of \refpr{} and \refthm{plainbehstrcoincide}
of Section~\ref{asynchronous}.

For better readability we will use the abbreviations $\iprecond{x} := \{y \mid (y,
x) \in F'\}$ and $\ipostcond{x} := \{y \mid (x, y) \in F'\}$ instead of
$\precond{x}$ or $\postcond{x}$ when making assertions about the flow relation
of an implementation.

The following lemma shows how the $D$-based asynchronous implementation of
a net $N$ simulates the behaviour of $N$.

\begin{lemmai}{Dbisim}{
  Let $N = (S, T, F, M_0, \ell)$ be a net, $A\subseteq
  \Act$, $\sigma\in \Act^*$ and $M_1, M_2 \subseteq S$.
  }
  \item\label{stepsim} If $M_1 \production{A}_{N} M_2$ then
  $M_1 \production{\tau}^*_{I_D(N)}\production{A}_{I_D(N)} M_2$.
  \item If $M_1 \Production{\sigma}_{N} M_2$ then $M_1 \Production{\sigma}_{I_D(N)} M_2$.
\end{lemmai}

\begin{proof}
  Assume $M_1 \,[G\rangle_N\, M_2$.  Then, by construction of $I_D(N)$,
  $$M_1 ~[\{t_s \mid t \inp G,~ s \inp \precond{t},~ s \not\equiv_D t\}\rangle_{I_D(N)}~
  [\{t \mid t \inp G\}\rangle_{I_D(N)}~ M_2.$$
  The first part of that execution can be split into a sequence of
  singleton transitions, all labelled $\tau$.
\\
  The second statement follows by a straightforward induction on the
  length of $\sigma$.
\end{proof}

This lemma uses the fact that any marking of $N$ is also a marking on
$I_D(N)$. The reverse does not hold, so in order to describe the
degree to which the behaviour of $I_D(N)$ is simulated by $N$ we need
to explicitly relate markings of $I_D(N)$ to those of $N$.  This is in
fact not so hard, as any reachable marking of $I_D(N)$ can be obtained
from a reachable marking of $N$ by moving some tokens into the newly
introduced buffering places $s_t$. To establish this formally, we
define a function which transforms implementation markings into the
related original markings, by shifting these tokens back.

\begin{definitionG}{tauback}{
  Let $N = (S, T, F, M_0, \ell)$ be a net
  and let $I_D(N) = (S \cup S^\tau, T\cup T^\tau, F', M_0, \ell')$.
  }
  $\tb: S \cup S^\tau \into S$ is the function defined by
  \begin{equation*}
    \tb(p) := \begin{cases}
      s & \text{ iff } p = s_t \text{ with } s_t \in S^\tau\!,~ s \in S,~ t \in T\\
      p & \text{ otherwise } (p \inp S)
    \end{cases}
  \end{equation*}
\vspace{-1em}
\end{definitionG}

Where necessary we extend functions to sets elementwise.
So for any $M \subseteq S \cup S^\tau$ we have $\tb(M)=\{\tb(s)\mid
s\in M\} = (M\cap S)\cup \{s \mid s_t\in M\}$. In particular,
$\tb(M)=M$ when $M\subseteq S$.

We now introduce a predicate $\alpha$ on the markings of $I_D(N)$
that holds for a marking iff it can be obtained from a reachable
marking of $N$ (which is also a marking of $I_D(N)$) by firing some
unobservable transitions. Each of these unobservable transitions moves
a token from a place $s$ into a buffering place $s_t$.\linebreak[2]
Later, we will show that $\alpha$ exactly characterises the reachable
markings of $I_D(N)$.
Furthermore, as every token can be moved only once, we can also give
an upper bound on how many such movements can still take place.
\pagebreak[2]

\begin{definitionG}{Dbivalidmarking}{
  Let $N = (S, T, F, M_0, \ell)$ be a net and $I_D(N) = (S \cup S^\tau,
  T \cup T^\tau, F', M_0, \ell')$.
  }
  The predicate $\fsivalidmarking \subseteq \powerset{S \cup S^\tau}$ is given by
  $$\fsivalidmarking(M) ~~:\equivalent~~
  \tb(M) \inp [M_0\rangle_N \wedge \forall p, q \inp M. \tb(p) = \tb(q) \implies p = q.$$
  The function $\fsidistance: \powerset{S \cup S^\tau} \into \IN \cup \{\infty\}$ is given by
  $\fsidistance(M) := |M \cap \{s \mid s \in S,~ \exists t \inp \postcond{s}.~
  s \not\equiv_D t\}|$, where
  we choose not to distinguish between different degrees of infinity.
\end{definitionG}

Note that $\fsivalidmarking(M)$ implies $|M|=|\tb(M)|$, and reachable
markings of $N$ are always finite (thanks to our definition of a net).
Hence $\fsivalidmarking(M)$ implies $d(M)\inp\IN$.  The following lemma
confirms that our informal description of $\fsivalidmarking$ matches
its formal definition.

\begin{lemmaG}{lem-Dbivalidmarking}{
  Let $N$ and $I_D(N)$ be as above and $M \subseteq S\cup S^\tau$,
  with $M$ finite.
  }
  Then $\forall p, q \inp M. \tb(p) = \tb(q) \implies p = q$ iff
  $\tb(M)\production{\tau}_{I_D(N)}^* M$.
\end{lemmaG}

\begin{proof} 
  Given that $\tb(M)\subseteq S$, ``if'' follows directly from the construction
  of $I_D(N)$.\\
  For ``only if'', assume $\forall p, q \inp M. \tb(p) = \tb(q) \implies p = q$.
  Then $\tb(M) ~[\{t_s \mid s_t \inp M\}\rangle_{I_D(N)}~ M$.
\end{proof}

Now we can describe how any net simulates the behaviour of
its fully asynchronous implementation.

\begin{lemmai}{Dbiworking}{
  Let $N$ and $I_D(N)$ be as above, $A\subseteq \Act$, $\sigma\in \Act^*$
  and $M,M' \subseteq S \cup S^\tau$.
  }
    \item\label{Dbiinvarstart} $\fsivalidmarking(M_0)$.
    \item\label{Dbiimplstep} If $\fsivalidmarking(M) \wedge M
      \production{A}_{I_D(N)} M'$ then
      $\tb(M) \production{A}_N \tb(M') \wedge \fsivalidmarking(M')$.
    \item\label{Dbiimpltaustep} If $\fsivalidmarking(M) \wedge M \production{\tau}_{I_D(N)} M'$
    then $\fsidistance(M) > \fsidistance(M') \wedge \tb(M) = \tb(M') \wedge \fsivalidmarking(M')$.
    \item\label{Dbiimpllongstep} If $M_0 \Production{\sigma}_{I_D(N)}
      M'$ then $M_0 \Production{\sigma}_N \tb(M')\wedge \fsivalidmarking(M')$.
\end{lemmai}
\begin{proof}
  \refitem{Dbiinvarstart}: $M_0 \in [M_0\rangle_N$ and $\forall s \in M_0 \subseteq S. \tb(s) = s$.

  \refitem{Dbiimplstep}: Suppose $\fsivalidmarking(M)$ and $M
    ~[G\rangle_{I_D(N)}~ M'$ with $G\subseteq T$.
    So $\tb(M)$ is a reachable marking of $N$.

    Note that for any $t\inp T$ we have that $\tb(\iprecond{t}) = \precond{t}$.  
    Moreover, $\alpha(M)$ implies that
\vspace{-1ex}
\begin{equation}\label{disjoint}
    X,Y \subseteq M \wedge X\cap Y = \emptyset ~~\Rightarrow~~ \tb(X)\cap\tb(Y)=\emptyset
\vspace{-1em}
\end{equation}
and hence
\vspace{-1ex}
\begin{equation}
    Y \subseteq M ~~\Rightarrow~~\tb(M\setminus
    Y)=\tb(M)\setminus\tb(Y) \trail{.}
\end{equation}
    Let $t \in G$. Since $t$ is enabled in $M$, we have $\iprecond{t}
    \subseteq M$ and hence $\precond{t} = \tb(\iprecond{t}) \subseteq
    \tb(M)$.  Given that $N$ is contact-free and $\tb(M) \in [M_0\rangle_N$,
    it follows that $t$ is enabled in $\tb(M)$.

    Now let $t, u \in G$ with $t \ne u$.  Then $\iprecond{t} \cup
    \iprecond{u} \subseteq M$ and $\iprecond{t} \cap \iprecond{u} =
    \varnothing$, so $\precond{t} \cap \precond{u} = \tb(\iprecond{t})
    \cap \tb(\iprecond{u}) = \emptyset$, using (\ref{disjoint}).
    Given that $\precond{t} \cup \precond{u} \subseteq \tb(M)$ and $N$
    is contact-free, it follows that also $\postcond{t} \cap
    \postcond{u} = \varnothing$ and hence $t$ and $u$ are independent.

    Since $M'= (M \setminus \iprecond{G}) \cup \ipostcond{G}$ we have
      $\tb(M')
      = (\tb(M) \setminus \tb(\iprecond{G})) \cup \tb(\ipostcond{G})
      = (\tb(M) \setminus \precond{G}) \cup \postcond{G}$ and hence
    $\tb(M) ~[G\rangle_N~ \tb(M')$.

    Next we establish $\fsivalidmarking(M')$. To this end, we may
    assume that $G$ is a singleton set, for $G$ must be finite---this
    follows since all (independent) transitions in $G$ are enabled
    from the reachable marking $\tb(M)$ of $N$, and $N$ satisfies the
    finiteness restrictions imposed on nets in Section~\ref{basic}---and when
    $M[\{t_0,t_1,\ldots,t_n\}\rangle M'$ for some $n\geq 0$ then there
    are $M_1,M_2,\ldots,M_n$ with $M\,[\{t_0\}\rangle\, M_1[\{t_1\}\rangle\, M_2
    \cdots M_n\,[\{t_n\}\rangle\, M'$, allowing us to obtain the general
    case by induction.  So let $G=\{t\}$ with $t\inp T$.

    Above we have shown that $\tb(M')\inp[M_0\rangle_N$.  We still
    need to prove that $\tb(p) = \tb(q) \implies p = q$ for all $p, q
    \inp M'$.  Assume the contrary, i.e.\ there are $p, q \in M'$ with
    $\tb(p) = \tb(q)$ but $p \ne q$.  Since $\fsivalidmarking(M)$, at
    least one of $p$ and $q$---say $p$---must not be present in $M$.
    Thus $p \in \ipostcond{t}=\postcond{t}\subseteq S$.  As
    $\tb(q)=\tb(p)=p$ and $q\neq p$, it must be that $q\in S^\tau$.
    Hence $q\mathbin{\notin}\ipostcond{t}$, so $q\inp M$, and
    $p=\tb(q)\in\tb(M)$.  As shown above, $t$ is enabled in $\tb(M)$.
    By the contact-freeness of $N$, $(\tb(M)\setminus \precond{t})\cap
    \postcond{t}=\emptyset$, so $p \inp \precond{t}$.  Using that
    $p\mathbin{\not\in}M$, we find that $p \mathbin{\not\in}
    \iprecond{t} \subseteq M$, so $p \not\equiv_D t$ and $p_t \in
    \iprecond{t} \subseteq M$. As by construction
    $\iprecond{t}\cap\ipostcond{t}=\emptyset$, we have $p_t\not\in
    M'$, so $q\neq p_t$. Yet $\tb(q)=p=\tb(p_t)$, contradicting
    $\fsivalidmarking(M)$.

  \refitem{Dbiimpltaustep}: Let $t_s \in T^\tau$ such that $M \,[\{t_s\}\rangle_{I_D(N)}\, M'$.
    Then, by construction of $I_D(N)$, $\iprecond{t_s} = \{s\} \wedge \ipostcond{t_s} = \{s_t\}$.
    Hence $M' = M \setminus \{s\} \cup \{s_t\}$ and
    $\fsidistance(M') = \fsidistance(M) - 1 \wedge \tb(M') = \tb(M)$.
    Moreover, $\fsivalidmarking(M') \equivalent \fsivalidmarking(M)$.

  \refitem{Dbiimpllongstep}: Using \ref{Dbiinvarstart}--\ref{Dbiimpltaustep},
   this follows by a straightforward induction on the number of
   transitions in the derivation $M_0 \Production{\sigma}_{I_D(N)} M'$.
\end{proof}

It follows that $\fsivalidmarking$ exactly characterises the
reachable markings of $I_D(N)$:

\begin{lemmaG}{Dbionesafe}{
  Let $N$ and $I_D(N)$ be as before and $M \subseteq S \cup S^\tau$.
  }
    Then $M \in [M_0\rangle_{I_D(N)}$ iff $\fsivalidmarking(M)$.
\end{lemmaG}

\begin{proof}
  ``Only if'' follows from \reflem{Dbiworking}.\ref{Dbiimpllongstep},
   and ``if'' follows by Lemmas~\ref{lem-Dbisim} and~\ref{lem-Dbivalidmarking}.
\end{proof}

Using this we now prove \refpr{} from Section~\ref{asynchronous}:
  
\setcounter{prop}{0}
\begin{proposition}{pr-2}
  For any (contact-free) net $N=(S,T,F,M_0,\ell)$, and any choice of $\equiv_D$, the
  net $I_D(N)$ is contact-free, and satisfies the other requirements
  imposed on nets, listed in Section~\ref{basic}.
\end{proposition}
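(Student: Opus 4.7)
The plan is to split the proof into two parts: the routine finiteness/non-emptiness bookkeeping, and the substantive contact-freeness argument that leverages the $\alpha$-invariant machinery already developed in the appendix.

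First I would check that $I_D(N)$ inherits the bookkeeping restrictions of Section~\ref{basic} directly from the definition of $I_D$. Each new silent transition $t_s\in T^\tau$ has $\iprecond{t_s}=\{s\}$ and $\ipostcond{t_s}=\{s_t\}$, so it has finite non-empty preset and finite postset. For each $t\in T$, inspection of $F'$ shows $\iprecond{t}=(\precond{t}\cap\{s:s\equiv_D t\})\cup\{s_t:s\in\precond{t},~s\not\equiv_D t\}$ is in bijection with $\precond{t}$ in $N$, hence finite and non-empty, while $\ipostcond{t}=\postcond{t}$ is finite. Each place $s\in S$ has $\ipostcond{s}$ in bijection with $\postcond{s}$ in $N$, hence finite; each $s_t\in S^\tau$ has singleton postset. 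The initial marking is $M_0$, unchanged.

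For contact-freeness, let $M\in[M_0\rangle_{I_D(N)}$ and suppose a transition $t'$ of $I_D(N)$ satisfies $\iprecond{t'}\subseteq M$. By Lemma~\ref{lem-Dbionesafe}, $\alpha(M)$ holds, so $\tb(M)\in[M_0\rangle_N$ and $\tb$ is injective on $M$. I would distinguish two cases. If $t'=t_s\in T^\tau$, then $\iprecond{t'}=\{s\}$ and $\ipostcond{t'}=\{s_t\}$; were $s_t\in M$, we would have $\tb(s)=s=\tb(s_t)$ with $s\neq s_t$, contradicting $\alpha(M)$. Hence $(M\setminus\{s\})\cap\{s_t\}=\emptyset$.

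The main case is $t'=t\in T$. Here $\ipostcond{t}=\postcond{t}$, and $\iprecond{t}\subseteq M$ forces $\precond{t}=\tb(\iprecond{t})\subseteq\tb(M)$. Contact-freeness of $N$ applied at the reachable marking $\tb(M)$ then yields $(\tb(M)\setminus\precond{t})\cap\postcond{t}=\emptyset$. Suppose for contradiction that some $p\in(M\setminus\iprecond{t})\cap\postcond{t}$ exists. Since $p\in\postcond{t}\subseteq S$, $\tb(p)=p\in\tb(M)$, so $p\in\precond{t}$. If $p\equiv_D t$, then $p\in\iprecond{t}$, a contradiction. Otherwise $p_t\in\iprecond{t}\subseteq M$, giving two distinct elements $p,p_t\in M$ with the same $\tb$-image, which again contradicts $\alpha(M)$. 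This is the step I expect to be the trickiest, since it is the one place where the full strength of the $\alpha$-invariant (both $\tb(M)\in[M_0\rangle_N$ and the injectivity of $\tb$ on $M$) is genuinely needed, and where one must be careful to separate the $p\equiv_D t$ and $p\not\equiv_D t$ subcases. Everything else reduces to direct inspection of the construction.
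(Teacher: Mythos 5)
Your proof is correct and follows essentially the same route as the paper's: the bookkeeping conditions by inspection of the construction, and contact-freeness via the invariant $\fsivalidmarking(M)$ for reachable markings of $I_D(N)$, using both $\tb(M)\in[M_0\rangle_N$ together with contact-freeness of $N$ and the injectivity of $\tb$ on $M$ to derive the contradiction in each of the two cases $t\in T$ and $t_s\in T^\tau$. The only (welcome) difference is that you make the subcase split $p\equiv_D t$ versus $p\not\equiv_D t$ explicit where the paper leaves it implicit.
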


\begin{proof}
  Let $M\in[M_0\rangle_{I_D(N)}$. Then
  $\fsivalidmarking(M)$, and hence $\tb(M) \in [M_0\rangle_N$.

 Consider any $t \in T$ with $\iprecond{t}\subseteq M$. Assume $(M
    \setminus \iprecond{t}) \cap \ipostcond{t} \ne \varnothing$. Since
    $\ipostcond{t} = \postcond{t} \subseteq S$ let $p \in S$ be such
    that $p \in M \cap \ipostcond{t}$ and $p \not\in \iprecond{t}$.
    As $N$ is contact-free we have $(\tb(M) \setminus \precond{t})
    \cap \postcond{t}=\emptyset$, so since $p \in \tb(M) \cap
    \postcond{t}$ it must be that $p \inp \precond{t}$.
    Hence $p_t \inp \iprecond{t} \subseteq M$ and we have $p \neq p_t$
    yet $\tb(p) \mathbin= p \mathbin= \tb(p_t)$, violating $\fsivalidmarking(M)$.

    Now consider any $t_p \in T^\tau$ with $\iprecond{t_p}\subseteq M$.
    As $\iprecond{t_p} = \{p\}$ and $\ipostcond{t_p} = \{p_t\}$
    we have that $(M \setminus \iprecond{t_p}) \cap \ipostcond{t_p} \ne \varnothing$
    only if $p \in M \wedge p_t \in M$. However, $\tb(p) = p = \tb(p_t)$ which would
    violate $\fsivalidmarking(M)$.

  This established the contact-freeness of $I_D(N)$. By construction,
  $M_0$ is finite, $\iprecond{t}\neq\emptyset$ and $\iprecond{t}$ and
  $\ipostcond{t}$ are finite for all $t\inp T \cup T^\tau$, and
  $\ipostcond{s}$ is finite for all $s\inp S \cup S^\tau$.
\end{proof}

The following lemma is a crucial step in the proof of
\refthm{plainbehstrcoincide}.

\begin{lemmaG}{lem-distributed conflict}
  {Let $N=(S,T,F,M_0,\ell)$ be a net without a distributed conflict
  w.r.t.\ a distribution $D$.}  Let $M_1 \inp [M_0\rangle_N$
  and $M_1 \production{\tau}_{I_D(N)} M_2 \production{\tau}_{I_D(N)}
  \cdots \production{\tau}_{I_D(N)} M_n
  \arrownot\production{\tau}_{I_D(N)}$ for some $n\geq 1$.
  Then, \plat{$M_1 \production{A}_N$ iff $M_n
  \production{A}_{I_D(N)}$} for all $A\subseteq \Act$.
\end{lemmaG}

\begin{proof}
  Suppose $\precond{t}\subseteq M_1$ but $\iprecond{t}\not\subseteq
  M_n$ for some $t\inp T$.  For $p\inp \precond{t}$ write $\hat{p}_t
  := p_t$ if $p \not\equiv_D t$ and $\hat{p}_t=p$ otherwise. Then
  $\iprecond{t} = \{\hat{p}_t \mid p \in \precond{t}\}$. Pick $p \inp
  \precond{t}$ such that $\hat{p}_t \mathbin{\not\in} M_n$. As
  $M_n\arrownot\production{\tau}_{I_D(N)}$ we also have $p \mathbin{\not\in} M_n$.
  Let $1\leq i<n$ be the last index such that $p \inp M_i$ or $\hat{p}_t \inp
  M_i$. Then $M_i ~[\{u_p\}\rangle_{I_D(N)}~ M_{i+1}$ for some $u \inp T$ with
  $u \ne t$, $p \inp \precond{u}$ and $p \not\equiv_D u$. But this would
  constitute a distributed conflict w.r.t.\ $D$. $\lightning$

  It follows that $M_1 \,[t\rangle_N$ implies $M_n \,[t\rangle_{I_D(N)}$
  for all $t \inp T$. Moreover, it follows immediately from the
  construction of $I_D(N)$ that if two transitions $t,u\inp T$ are
  independent in $N$, then they are also independent in $I_D(N)$.
  Hence $M_1 \,[G\rangle_N$ implies $M_n \,[G\rangle_{I_D(N)}$ for all
  $G \subseteq T$. Thus \plat{$M_1 \production{A}_N$ implies $M_n
  \production{A}_{I_D(N)}$}.

  For the reverse direction, observe that $\fsivalidmarking(M_1)$ and
  $\tb(M_1)=M_1$ because $M_1 \inp [M_0\rangle_N$. Hence
  $\fsivalidmarking(M_n)$ and $\tb(M_n)=M_1$ by
  \reflem{Dbiworking}.\ref{Dbiimpltaustep}
  and \plat{$M_n \production{A}_{I_D(N)}$ implies $M_1
  \production{A}_N$} for all $A$ by
  \reflem{Dbiworking}.\ref{Dbiimplstep}.
\end{proof}

\setcounter{theo}{0}
\begin{theoremG}{}
  {Let $N = (S,T,F,M_0,\ell)$ be a plain net, and $\quire$ a
  requirement on distributions of nets.}
  Then $N$ is behaviourally $\quire$-asynchronous iff it is
  structurally $\quire$-asynchronous.
\end{theoremG}
\begin{proof}
``Only if'': Suppose $N$ fails to be structurally
  $\quire$-asynchronous. Let $D$ be a distribution on $N$ meeting the
  requirement $\quire$.
  Then $N$ has a distributed conflict with respect to $D$,
  i.e.\ $$\exists t,u\inp T \;\exists p \inp \precond{t} \cap \precond{u}.
  t\neq u \wedge p \not\equiv_D u \wedge
  \exists M \inp [M_0\rangle_N. \precond{t} \subseteq M\trail{.}$$
  We need to show that $I_D(N) \not \approx_{\mathscr{R}} N$.

  Let $M\inp [M_0\rangle_N$ be such that $\precond{t} \subseteq M$ and
  let $\sigma\in\Act^*$ be such that $M_0 \Production{\sigma}_N M$.
  Then $N$ has a step ready pair $\rpair{\sigma,X}$ with $\{\ell(t)\} \inp
  X$. As plain nets are deterministic, $M$ is the only marking of $N$
  with the property that $M_0\Production{\sigma}_N M$.  Hence
  $N$ has exactly once step ready pair of the form $\rpair{\sigma,X}$,
  and it satisfies $\{\ell(t)\} \inp X$.

  \reflem{Dbisim} yields \plat{$M_0 \Production{\sigma}_{I_D(N)} M$}.
  Let $M_1 := (M \setminus \{p\}) \cup \{p_u\}$. Then \plat{$M
  ~[u_p\rangle_{I_D(N)}~ M_1$} by \refdf{fsi}, so $M \production{\tau}
  M_1$. By \reflem{Dbiworking}.\ref{Dbiimpltaustep}, we have $M_1
  \production{\tau}_{I_D(N)} M_2 \production{\tau}_{I_D(N)} \cdots
  \production{\tau}_{I_D(N)} M_n \arrownot\production{\tau}_{I_D(N)}$
  for some $n\leq d(M)\inp\IN$.  As $\ipostcond{v} \subseteq S^\tau$
  for all $v \inp T^\tau$, we have $p \mathbin{\not\in} M_i$ for
  $i=1,2,\ldots,n$. Moreover, in case $p \not\equiv t $ we have $p_t
  \inp \ipostcond{v}$ only if $p \inp \iprecond{v}$; hence also $p_t
  \mathbin{\not\in} M_i$ for $i=1,2,\ldots,n$. It follows that
  $\iprecond{t} \not\subseteq M_n$. Thus $I_D(N)$ has a step ready
  pair $\rpair{\sigma,X}$ with $\{\ell(t)\}\not\in X$.
  We find that $\readyset(I_D(N))\neq\readyset(N)$.

  ``If'': Suppose $N$ is structurally $\quire$-asynchronous, i.e.\
  there is a distribution $D$ on $N$ meeting the requirement $\quire$,
  such that $N$ has no distributed conflicts with respect to $D$.
  We show that $\readyset(I_D(N)) = \readyset(N)$.

  `'$\supseteq$'': Let $\rpair{\sigma,X} \inp \readyset(N)$.  Then
  there is a marking $M$ of $N$ such that $M_0 \Production{\sigma}_N
  M$, \plat{$M \production{A}_N$} for all $A \inp X$ and \plat{$M
  \arrownot\production{A}_N$} for all $A \mathbin{\not\in} X$.
  \reflem{Dbisim} yields \plat{$M_0 \Production{\sigma}_{I_D(N)} M$}.
  By \reflem{Dbiworking}.\ref{Dbiimpltaustep}, we have
  $M\production{\tau}_{I_D(N)} M_1 \production{\tau}_{I_D(N)} M_2
  \production{\tau}_{I_D(N)} \cdots \production{\tau}_{I_D(N)} M_n
  \arrownot\production{\tau}_{I_D(N)}$ for some $0\leq n\leq
  d(M)\inp\IN$.  Now \reflem{distributed conflict} yields
  $\rpair{\sigma,X} \inp \readyset(I_D(N))$.

  ``$\subseteq$'': Let $\rpair{\sigma,X} \inp \readyset(I_D(N))$.
  Then there is a marking $M$ of $I_D(N)$ such that $M_0
  \Production{\sigma}_{I_D(N)} M$, \plat{$M\arrownot\production{\tau}_{I_D(N)}$},
  and \plat{$M \production{A}_{I_D(N)}$} iff $A \inp X$.
  \reflem{Dbiworking}.\ref{Dbiimpllongstep} yields
  \plat{$M_0 \Production{\sigma}_{N} \tb(M) \wedge \fsivalidmarking(M)$} and
  \reflem{Dbivalidmarking} gives $\tb(M)\production{\tau}_{I_D(N)}^* M$.
  Now \reflem{distributed conflict} yields $\rpair{\sigma,X} \inp \readyset(N)$.
\end{proof}

\section{The Transition-Controlled-Choice Implementation}
\label{transition-controlled-choice}

In this appendix we show that the transition-controlled-choice
implementation of any net $N$ is step readiness equivalent to $N$.
To this end we use the following result.

\setcounter{equation}{0}
\begin{lemmaG}{lem-branching-bisimulation}{
  Let $N = (S, T, F, M_0, \ell)$ and $N' = (S'\!, T'\!, F'\!, M'_0, \ell')$
  be two nets, and $\ell'(t)\mathbin{\ne}\tau$ for $t \inp T'$.}
  Suppose there is a function $\tbtci\!: \powerset{S} \into
  \powerset{S'}$ from the markings of $N$ to the markings of $N'$,\linebreak[2]
  a \emph{distance} function $\tcidistance: \powerset{S} \into \IN \cup \{\infty\}$
  and a predicate $\tcivalidmarking \subseteq \powerset{S}$ such that
  \begin{align}
    & \tcivalidmarking(M_0) \wedge \tbtci(M_0) = M'_0\\
    & \plat{$\tcivalidmarking(M_1) \wedge M_1 \production{\tau}_{N} M_2 ~~\implies~~
      \tcivalidmarking(M_2) \wedge \tbtci(M_2) = \tbtci(M_1) \wedge
      \tcidistance(M_1) > \tcidistance(M_2)$}\\
    & \plat{$\tcivalidmarking(M_1) \wedge M_1 \production{A}_{N} M_2 ~~\implies~~
      \tcivalidmarking(M_2) \wedge \tbtci(M_1) \production{A}_{N'} \tbtci(M_2)$}\\
    & \plat{$\tcivalidmarking(M_1) \wedge \tcidistance(M_1) > 0 ~~\implies~~
      M_1 \production{\tau}_{N}$}\\
    & \plat{$\tcivalidmarking(M_1) \wedge \tcidistance(M_1) = 0 \wedge
      \tbtci(M_1) \production{A}_{N'} M'_2 ~~\implies~~ \exists M_2.\,
      M_1 \production{A}_{N} M_2 \wedge M'_2 = \tbtci(M_2)$} \trail{.}
  \end{align}
  Then $N \approx_\mathscr{R} N'$.
\end{lemmaG}

\begin{proof}
  ``$\readyset(N) \subseteq \readyset(N')$'': Conditions (1--5) allow
  any step ready pair $\rpair{\sigma,X}$ of $N$ to be mimicked step
  for step by $N'$. To be precise, if $\rpair{\sigma,X}\inp \readyset(N)$,
  then there is a marking $M_1$ with $M_0 \Production{\sigma}_N M_1$,
  \plat{$M_1 \arrownot\production{\tau}_N$}, \plat{$M_1 \production{A}_N$} for any
  $A \inp X$ and \plat{$M_1 \arrownot\production{A}_N$} for any $A \mathbin{\not\in} X$.
  As for all reachable markings $M_1$ of $N$, we have $\tcivalidmarking(M_1)$.
  Now (1--3) imply $M'_0 \Production{\sigma}_{N'} \tbtci(M_1)$.
  Furthermore, (3) implies \plat{$\tbtci(M_1) \production{A}_{N'}$} for
  any $A \inp X$, (4) implies $\tcidistance(M_1)=0$, and hence (5) implies
  \plat{$\tbtci(M_1) \arrownot\production{A}_{N'}$} for any $A \mathbin{\not\in} X$.

  ``$\readyset(N') \subseteq \readyset(N)$'': From conditions (2--5)
  we infer:
  \begin{align}
    & \plat{$\tcivalidmarking(M_1) ~~\implies~~ \exists M_2.\,
      M_1 \production{\tau}_N^* M_2 \wedge M_2
      \arrownot\production{\tau} \wedge~ \tcivalidmarking(M_2) \wedge
      \tbtci(M_2)=\tbtci(M_1)$}\\
    & \plat{$\tcivalidmarking(M_1) \wedge
      \tbtci(M_1) \production{A}_{N'} M'_2 ~~\implies~~ \exists M_2.\,
      M_1 \production{\tau}_N^*\production{A}_{N} M_2 \wedge
      \tcivalidmarking(M_2) \wedge M'_2 = \tbtci(M_2)$}
  \end{align}
  The first statement follows by repeated application of (2);
  the second by repeated application of (4) and (2), then (5) and (3).
  Conditions (1) and (7) imply that every reachable marking of $N'$ is
  of the form $\tbtci(M)$ with $M$ a reachable marking of $N$.
  Moreover, (1), (6) and (7) yield, for $\sigma \inp \Act^*$,
  \begin{align*}
    & \plat{$M'_0 \Production{\sigma}_{N'} M' ~~\implies~~ \exists M.\,
      M_0 \Production{\sigma}_{N} M \wedge M
      \arrownot\production{\tau} \wedge~
      \tcivalidmarking(M) \wedge M' = \tbtci(M)$}\trail{.}
  \end{align*}
  In combination with (3--5) this implies that any ready pair
  $\rpair{\sigma,X}$ of $N'$ is also a ready pair of $N$. 
\end{proof}

In fact, conditions (1--5) are strong enough to show that $N$ and $N'$
are semantically equivalence in various other ways as well; in
particular $\tbtci$ constitutes a \emph{branching bisimulation}
between $N$ and $N'$, as defined in \cite{GW96}.
In order to apply \reflem{branching-bisimulation}, we will take $N$ to
be the transition-controlled-choice implementation of a given net $N'$
that features no transitions labelled $\tau$. 

\begin{definitionG}{df-tciimplobjects}{}
  Let $N' = (S, T, F', M_0, \ell')$ be a net with
  $\ell'(t)\mathbin{\ne}\tau$ for $t \inp T'$, and $N = (S \cup S^\tau, T
  \cup T^\tau, F, M_0, \ell)$ its transition-based-choice implementation.

  The function $\tbtci: \powerset{S \cup S^\tau} \into \powerset{S}$ is defined by
  \begin{align*}
    \tbtci(M) := (M \cap S) \cup \{s \mid s \in S,~
      \{s^{[t]} \mid t \in \postcond{s}\} \subseteq M\} \cup
    \{s \mid s \in \postcond{t} \wedge \circled{t} \in M\} \trail{.}
  \end{align*}
  The function $\tbsafetci: \powerset{S \cup S^\tau} \into \powerset{S}$ is defined by
  \begin{align*}
    \tbsafetci(M) := (M \cap S) \cup \{s \mid s \in S,
      \{s^{[t]} \mid t \in \postcond{s}\} \cap M \ne \varnothing\} \cup
    \{s \mid s \in \precond{t} \wedge \circled{t} \in M\} \trail{.}
  \end{align*}
  The function $\tcidistance: \powerset{S \cup S^\tau} \into \IN \cup \{\infty\}$ is
  defined by
  \[
    \tcidistance(M) :=
    |M \cap S| + \sum_{\psscalebox{0.5}{\circled{t}} \in M} (1 + |\postcond{t}|) +
      \sum_{s_t^{[u]} \in M} 1 \trail{.}
  \vspace{-1em}
  \]
  The predicate $\tcivalidmarking \subseteq \powerset{S\cup S^\tau}$ is defined by
  \setcounter{equation}{0}
  \def\theequation{$\beta_\arabic{equation}$}
  \begin{align}
    \!\!\!\tcivalidmarking(M) :\equivalent {}
\label{B1} &\tbsafetci(M) \in [M_0\rangle_{N'} \wedge {} \\
\label{B2} &\left(s^{[t]} \in M \implies s \notin M\right) \wedge {} \\
\label{B3} &\left(s^{[u]} \inp M \wedge s^{[t]} \notinp M \implies
      \exists v \in \postcond{s}. s_v^{[u]} \in M\right) \wedge {}\\
\label{B4} & \left(\circled{t},\circled{u} \in M \wedge t\ne u \implies
      \precond{t} \cap \precond{u} = \emptyset\right) \wedge {}\\
\label{B5} &\left(s_t^{[u]} \in M \implies
      \overline{s}_t^{[u]} \notin M \wedge s^{[u]}, \circled{t} \in M\right) \wedge {}\\
\label{B6} &\left(\overline{s}_t^{[u]} \in M \implies \circled{t} \in M\right) \wedge {} \\
\label{B7} &\left(\circled{t} \in M \implies
      \forall s \inp \precond{t},\, u \inp \postcond{s}. s, s^{[t]} \notinp M \wedge
      \left([u] \ne [t] \implies
          s_t^{[u]} \inp M \vee \overline{s}_t^{[u]} \inp M\right)\right).
  \end{align}
\end{definitionG}
Some conjuncts in the definition of $\tcivalidmarking(M)$
  are universally quantified over (some of) $s$, $t$ and $u$;
  we write
  \begin{itemise}
  \item ${\beta_i}\!^{s,t,u}(M)$ to say that marking $M$
  satisfies the instance of $\beta_i$ for the specific values
  $s$, $t$ and $u$,
  \item ${\beta_i}\!^{s}(M)$ for $\forall t,u\inp
  \postcond{s}$. ${\beta_i}\!^{s,t,u}(M)$,
  \item and ${\beta_i}(M)$ for $\forall s\inp S$. ${\beta_i^s}(M)$,
  \vspace{1ex}
  \end{itemise}
  so that $\tcivalidmarking(M)$ iff $\mbox{\ref{B1}}(M) \wedge \mbox{\ref{B2}}(M)
  \wedge \mbox{\ref{B3}}(M) \wedge \mbox{\ref{B4}}(M) \wedge
  \mbox{\ref{B5}}(M) \wedge \mbox{\ref{B6}}(M) \wedge \mbox{\ref{B7}}(M)$.

\begin{lemmaG}{lem-tciimplobjectsok}{
  Let $N'$, $N$, $\tbtci$, $\tbsafetci$, $\tcidistance$, and $\tcivalidmarking$ be as
  in \refdf{tciimplobjects}.}  Then $N$ is a net as defined in Section~\ref{basic} and the
  clauses (1)--(5) of \reflem{branching-bisimulation} hold.
\end{lemmaG}
\begin{proof}
  Again, we use $\iprecond{x}$ and $\ipostcond{x}$ instead of
  $\precond{x}$ and $\postcond{x}$ when making assertions about the flow relation
  of $N$ (the implementation). Given that $\precond{t}\neq\emptyset$
  and $\precond{t}$ and $\postcond{t}$ are finite for all $t\inp T$
  and $\postcond{s}$ is finite for all $s \inp S$, by construction we
  have $\iprecond{t}\neq\emptyset$ and $\iprecond{t}$ and $\ipostcond{t}$ are
  finite for all $t\inp T \cup T^\tau$ and $\ipostcond{s}$ is finite
  for all $s \inp S \cup S^\tau$.  As $N$ has the same initial marking
  as $N'$, it must be finite. In order to show that $N$ is
  contact-free, we must show that for each reachable marking $M \in
  [M_0\rangle_N$ the following four properties are satisfied:
\begin{enumerate}
\vspace{-1ex}
\item[(i)] If $s \inp M$ then $s^{[t]} \notinp M$ for all $t \inp \postcond{s}$.
\vspace{-1ex}
\item[(ii)] If $s_t^{[u]},s^{[u]} \inp M$ then $\overline{s}_t^{[u]} \notinp M$.
\vspace{-1ex}
\item[(iii)] If $s^{[t]} \inp M$ for all $s \inp \precond{t}$ then
  $\circled{t} \notinp M$ and $s_t^{[u]} \notinp M$ for all $s \inp
  \precond{t}$ and $u \inp \postcond{s}$ with $[u]\neq[t]$.
\vspace{-1ex}
\item[(iv)] If $\circled{t}\inp M$ and $\overline{s}_t^{[u]} \inp M$ for all
  $s \inp \precond{t}$ and $u \inp \postcond{s}$ with $[u]\ne[t]$, then
  $M \cap \postcond{t} = \emptyset$.
\vspace{-1ex}
\end{enumerate}
  We proceed to show that all four properties are implied by $\tcivalidmarking(M)$.
  This entails that the contact-freeness of $N$ will follow
  immediately from the validity of clauses (1)--(3) of \reflem{branching-bisimulation}.

  Property (i) follows immediately from \ref{B2}$(M)$ and (ii) from \ref{B5}$(M)$.
  The claim $\circled{t}\notinp M$ of property~(iii) follows
  from \ref{B7}$(M)$, and using this the claim \plat{$s_t^{[u]} \notinp M$}
  from \ref{B5}$(M)$.  For (iv), assume, towards a contradiction, that
  $\circled{t}\inp M$, yet $s \inp M\cap \postcond{t}$. Then
  $\precond{t} \subseteq \tbsafetci(M)$. Now \ref{B1}$(M)$ and the
  contact-freeness of $N'$ gives
  $(\tbsafetci(M)\setminus \precond{t}) \cap \postcond{t}
  = \emptyset$. As $s \inp M\cap \postcond{t} \subseteq \tbsafetci(M)
  \cap \postcond{t}$ we obtain $s \inp \precond{t}$, contradicting \ref{B7}$(M)$.

  It remains to show the validity of clauses (1)--(5).
  Clause (1) follows directly from the definitions.

  \emph{Clause} (2): Assume $\tcivalidmarking(M_1)$. As remarked in
  Section~\ref{basic}, reachable markings of $N'$ are finite, so by
  \ref{B1}$(M_1\!)$ $M_1 \cap S$ is finite and $M_1$ contains only
  finitely many places of the form $\circled{t}$ (using \ref{B4}$(M_1)$
  and that $\precond{t}\ne\emptyset$ for $t\inp T$). Since for a given
  $t$, using that $\precond{t}$ and $\postcond{s}$ are finite, there
  are only finitely many places \plat{$s_t^{[u]}$} in $N$, it follows by
  \ref{B5}$(M_1)$ that $M_1$ contains only finite many places of the
  form \plat{$s_t^{[u]}$}.  From this we conclude that $\tcidistance(M_1)$ is
  finite.  We proceed by a case distinction over all transitions
  labelled $\tau$.

  Assume $M_1~[\boxed{s}\rangle_{N}~ M_2$. Then $M_2 = (M_1 \setminus \{s\}) \cup
  \{s^{[t]} \mid t \in \postcond{s}\}$ and $\tbtci(M_2) = \tbtci(M_1)$
  as well as $\tbsafetci(M_2) = \tbsafetci(M_1)$.
  Moreover, $\tcidistance(M_2) = \tcidistance(M_1) - 1$ as $s \inp M_1 \cap S$
  but $s \notinp M_2$ and the $s^{[t]}$ don't contribute to $\tcidistance$.
  It remains to check that $\tcivalidmarking(M_2)$. We will do that for
  each of the six conjuncts separately. 
  The validity of \ref{B1} is clearly preserved, in the sense that
  \ref{B1}$(M_1)$ implies \ref{B1}$(M_2)$. The same holds for
  \ref{B4} and \ref{B6}, as places  of the form $\circled{t}$ and
  \plat{$\overline{s}_t^{[u]}$} do not figure as pre- or postplaces of
  the transition $\boxed{s}$.
  Requirement \ref{B2}$\!^s(M_2)$ simply holds, as $s \notinp M_2$,
  whereas for $s' \ne s$ requirement \ref{B2}$\!^{s'}(M_2)$ is preserved.
  In the same way we obtain \ref{B3}$(M_2)$, \ref{B5}$(M_2)$ and \ref{B7}$(M_2)$.

  Assume $M_1~[t_s^{[u]}\rangle_{N}~ M_2$. Then $M_2 = (M_1 \setminus \{s_t^{[u]},
  s^{[u]}\}) \cup \{\overline{s}_t^{[u]}\}$. From $s_t^{[u]} \in M_1$ we obtain
  $\circled{t}\inp M_1$ by \ref{B5}$\!^{s,t,u}(M_1)$ and $s^{[t]}
  \notin M_1$ by \ref{B7}$(M_1)$. Hence the removal of any $s^{[u]}$
  does not affect $\tbtci$, and we have $\tbtci(M_2) = \tbtci(M_1)$.
  As the only change in summands contributing to $\tcidistance$ is the
  removal of \plat{$s_t^{[u]}$}, we have $\tcidistance(M_2) = \tcidistance(M_1) - 1$.
  Since $\circled{t} \inp M_1$, the removal of $s^{[u]}$ does not affect
  $\tbsafetci$ either, and we have $\tbsafetci(M_2) = \tbsafetci(M_1)$.
  Hence \ref{B1} is preserved.
  Requirement \ref{B2}$\!^{s,u}(M_2)$ holds (since $s^{[u]}\notinp M_2$)
  and \ref{B2}$\!^{s',t'}$ for $s'\ne s$ or $t'\ne u$ is preserved.
  Likewise, \ref{B3}$\!^{s,t',u}(M_2)$ holds (since $s^{[u]}\notinp M_2$)
  and \ref{B3}$\!^{s',t',u'}$ with $s'\ne s$ or $u' \ne u$ is preserved.
  Requirement \ref{B5}$\!^{s,t,u}(M_2)$ holds (because
  \plat{$s_t^{[u]}\notinp M_2$}), and \ref{B5}$\!^{s',t',u'}$ with $s'\ne s$
  or $u'\ne u$ is preserved. As for \ref{B5}$\!^{s,t',u}$ with $t'\ne
  t$, by \ref{B4}$(M_1)$ we have \plat{$\circled{t'}\notinp M_1$} and hence by
  \ref{B5}$\!^{s,t',u}(M_1)$ it must be that $s_{t'}^{[u]}\notinp
  M_1$, and thus $s_{t'}^{[u]}\notinp M_2$. This yields \ref{B5}$\!^{s,t',u}(M_2)$.
  Since $\circled{t}\inp M_1$ we have $\circled{t}\inp M_2$ and hence
  \ref{B6}$\!^{s,t,u}(M_2)$ holds. All other instances of \ref{B6} are preserved.
  Requirements \ref{B4} and \ref{B7} are preserved as well.

  Assume $M_1~[t'\rangle_{N}~ M_2$. Then $M_2 = (M_1 \setminus \{\circled{t},
  \overline{s}_t^{[u]} \mid s \inp \precond{t},~ u \inp \postcond{s},~ [u] \ne
  [t]\}) \cup \{s \mid s \inp \postcond{t}\}$ and $\tbtci(M_2) = \tbtci(M_1)$.
  Again $\tcidistance(M_2) = \tcidistance(M_1) -
  1$ as the single $\circled{t}$ contributed $1 + |\postcond{t}|$ whereas all
  the newly produced places $s$ together contribute $|\postcond{t}|$.
  As $\circled{t} \inp M_1$ we have $\precond{t} \in \tbsafetci(M_1)$.
  Moreover, for $s \inp \precond{t}$ and $u,v\inp\postcond{s}$, $[u]\ne
  [t]$, $v \ne t$ we have \plat{$\circled{t}, \overline{s}_t^{[u]} \inp M_1$}, so
  $s,s^{[t]},\circled{v}\notinp M_1$ by \ref{B7}($M_1$) and \ref{B4}($M_1$)
  and \plat{$s^{[u]}_t,s^{[u]}_v,s^{[u]} \notinp M_1$} by
  \ref{B5}($M_1$) and \ref{B3}($M_1$).
  Hence $\circled{t}$ is the only place in $M_1$ that contributes
  $s\inp \precond{t}$ to $\tbsafetci(M_1)$. Therefore $\tbsafetci(M_2) =
  (\tbsafetci(M_1) \setminus \precond{t}) \cup \postcond{t}$.
  Hence $\tbsafetci(M_1) ~[\{t\}\rangle_{N'}~ \tbsafetci(M_2)$,
  so \ref{B1} is preserved. Requirements \ref{B3}, \ref{B4}, \ref{B5} and \ref{B6}
  are easily seen to be preserved as well. 
  Since $N'$ is contact-free, we have $(\tbsafetci(M_1) \setminus
  \precond{t}) \cap \postcond{t} = \emptyset$, using \ref{B1}$(M_1)$.
  So for $s \inp \postcond{t}$ we have either $s \notinp
  \tbsafetci(M_1)$ or $s \inp \precond{t}$.  Either possibility implies
  $s^{[u]}\notinp M_1$ for $u \inp \postcond{s}$, and
  $\circled{v}\notinp M_1$ for $v \inp \postcond{s}$, $v\ne t$.
  Hence $s^{[u]},\circled{u}\notinp M_2$ for $u \inp \postcond{s}$.
  Using this, also \ref{B2} and \ref{B7} turn out to be preserved.

  \emph{Clause} (3):
  Assume $\tcivalidmarking(M_1) \wedge M_1 \,[G\rangle_{N}\, M_2$ with
  $\ell(t)\neq\tau$ for all $t\in G$.
  Then\vspace{-1ex} $$M_2 = (M_1\setminus \iprecond{G}) \cup \ipostcond{G}
  = M_1 \setminus \{s^{[t]} \mid s \inp \precond{G}\} \cup
  \{\circled{t}, s_t^{[u]} \mid t \inp G,~ s \inp \precond{t},~ u \in
  \postcond{s},~ [u] \ne [t]\}.$$
  For all $t \inp G$ and $s \inp \precond{t}$ we have $s^{[t]}\in M_1$ and hence
  $s \inp \tbsafetci(M_1)$. Thus $\tbsafetci(M_1) [t\rangle_{N'}$.

  \emph{Claim} 1: Let $t \inp G$, $s \inp \precond{t}$ and
  $u,v\inp\postcond{s}$. Then $\circled{v}\notinp M_1$ and
  $s^{[u]}\inp M_1$.

  \emph{Proof}: Assume, towards a contradiction, that $\circled{v}\in
  M_1$. Then $\precond{v} \subseteq \tbsafetci(M_1)$ and thus
  $\tbsafetci(M_1) [v\rangle_{N'}$.  As $s \in \precond{t}\cap\precond{v}$
  we have $\neg\tbsafetci(M_1) [t,v\rangle_{N'}$, so \ref{B1}$(M_1)$
  and \refdf{conflconcur} yield $t \# v$, and hence $[t]=[v]$.
  Nevertheless, \ref{B7}$(M_1)$ gives $s^{[v]}\notinp M_1$, whereas
  $s^{[t]}\inp M_1$.\contradiction

  Next assume that $s^{[u]}\notinp M_1$.
  Then \ref{B3}$\!^{s,u,t}(M_1)$ yields $\exists v \inp
  \postcond{s}. s_v^{[t]} \inp M_1$, and \ref{B5}$\!^{s,v,t}(M_1)$
  gives $\circled{v} \inp M_1$.\contradiction

  \emph{Claim} 2: Let $t_1, t_2 \inp G$ with $t_1 \neq t_2$. Then
  $\precond{t_1} \cap \precond{t_1} = \emptyset$.

  \emph{Proof}: Assume, towards a contradiction, that
  $s \inp \precond{t_1}\cap\precond{t_2}$.
  Then $\tbsafetci(M_1) [t_1\rangle_{N'}$ and
  $\tbsafetci(M_1) [t_2\rangle_{N'}$, but $\neg\tbsafetci(M_1)
  [t_1,t_2\rangle_{N'}$, so \ref{B1}$(M_1)$ and \refdf{conflconcur} yield
  $t_1 \# t_2$, and hence $[t_1]=[t_2]$. But this implies
  $s^{[t_1]} = s^{[t_2]} \in \iprecond{t_1}\cap\iprecond{t_2}$, contradicting
  $M_1 \,[G\rangle_{N}$.\contradiction

  \emph{Claim} 3: Let $t \inp G$, $s \inp \precond{t}$ and
  $v\inp\precond{s}$. Then $s,\circled{v}\notinp M_1$.

  \emph{Proof}: Since $s^{[t]} \inp M_1$ we have $s \notinp M_1$ by
  \ref{B2}$(M_1)$. Assume, towards a contradiction, that $\circled{v}\in
  M_1$. Then $\precond{v} \subseteq \tbsafetci(M_1) \in
  [M_0\rangle_{N'}$, using \ref{B1}$(M_1)$. As $N'$
  is contact-free, we have $(\tbsafetci(M_1) \setminus \precond{v}) \cap
  \postcond{v} = \emptyset$.  So since $s \inp \tbsafetci(M_1) \cap
  \postcond{v}$ it must be that $s \inp \precond{v}$.
  But then $\circled{v} \notinp M_1$ by Claim 1.\contradiction

  Claim 1 implies that $\precond{G} \subseteq \tbtci(M_1)$, and
  Claim 2 yields $\tbtci(M_1) ~[G\rangle_{N'}~M'_2$ for some $M'_2$.
  By Claim 3 we have $\tbtci(M_1\setminus \iprecond{G}) =
  \tbtci(M_1)\setminus \precond{G}$ and thus
  $$\tbtci(M_2) = \tbtci((M_1\setminus \iprecond{G})\cup \ipostcond{G})
  = (\tbtci(M_1)\setminus \precond{G}) \cup \postcond{G} = M'_2.$$
  It remains to check that $\tcivalidmarking(M_2)$.
  First of all, $\tbsafetci(M_2) = \tbsafetci(M_1)$ and hence \ref{B1}
  is preserved. It is easy to see that \ref{B2}, \ref{B6} and
  \ref{B7} are preserved. Requirement \ref{B3}$\!^{s}$ for $s \notinp
  \precond{G}$ is also preserved, whereas \ref{B3}$\!^{s}(M_2)$ for $s \inp
  \precond{t}$, $t \inp G$ holds with $v:=t$. Requirement \ref{B4}
  may fail to be preserved only if $\origexists t_1,t_2 \inp G$ with 
  $t_1 \ne t_2$ and $\precond{t_1}\cap\precond{t_2} \neq \emptyset$ or
  if $\origexists t \inp G$ and $\circled{v}\in M_1$ with 
  $\precond{t}\cap\precond{v} \neq \emptyset$.
  These cases are ruled out by Claims 2 and 1.
  Requirement \ref{B5}$\!^s$ with $s \notinp\precond{G}$ is preserved.
  Since there is no $\circled{v}\in M_1$ with
  $\precond{G}\cap\precond{v} \neq \emptyset$, by \ref{B5}$(M_1)$ and \ref{B6}$(M_1)$
  there are no \plat{$s_v^{[u]}, \overline{s}_v^{[u]} \in M_1$} with $s \inp \precond{G}$.
  Moreover, for all $t\inp G$, $s \inp \precond{t}$ and $u\inp
  \postcond{s}$ with $[u]\ne[t]$ we have $s^{[u]} \inp M_1$ and hence $s^{[u]} \inp M_2$.
  Thus we obtain \ref{B5}$\!^s(M_2)$ for $s \inp \precond{G}$.

  \emph{Clause} (4): By a case distinction on the three summands of $\tcidistance(M_1)$.

  Assume $\origexists s \inp M_1 \cap S$. Then $M_1 [\boxed{s}\rangle_{N'}$.

  Assume \plat{$\origexists s_t^{[u]} \inp M_1$}. Then by \ref{B5}$(M_1)$ also
  $s^{[u]} \in M_1$ and hence $M_1 [t_s^{[u]}\rangle_{N'}$.

  Assume $\origexists \circled{t} \inp M_1$ but
  \plat{$\neg\origexists s_t^{[u]} \inp M_1$}.
  Then by \ref{B7}$(M_1)$ also \plat{$\origexists \overline{s}_t^{[u]} \inp M_1$}
  for all $s \inp \precond{t}$ and $u\in \postcond{s}$ with $[u]\neq[t]$.
  Thus $M_1 [t'\rangle_{N'}$.

  \emph{Clause} (5): $\tcidistance(M_1) = 0$ implies $M_1 \cap S = \emptyset$ and
  $M_1$ does not contain places of the form $\circled{t}$ or $s_t^{[u]}$.
  By \ref{B6}$(M_1)$ it doesn't contain places of the form \plat{$\overline{s}_t^{[u]}$} either.
  Hence all places in $M_1$ have the form \plat{$s^{[t]}$} for $s\inp S$ and
  $t \inp \postcond{s}$. Moreover, by \ref{B3}$(M_1)$, for any $s \inp S$ either $M_1$
  contains all places $s^{[t]}$ with $t \inp \postcond{s}$ or none.
  Thus \plat{$M_1 = \{s^{[t]} \mid s \inp \tbtci(M_1),~ t \inp \postcond{s}\}$}.
  Using this, when $\tbtci(M_1) ~[G\rangle_{N'}~ M'_2$ for $G
  \subseteq T$, there is a unique $M_2$ such that \plat{$M_1 \,[G\rangle_N\, M_2$}.
  It remains to show that $\tbtci(M_2) = M'_2$.

  First of all, note that $M_2 \cap S = \emptyset$.
  Secondly, we have $$\{s \mid s \in S,~ \{s^{[t]} \mid t \in \postcond{s}\}
  \subseteq M_2\} = \{s \mid s \in \tbtci(M_1),~ s
  \not\in\precond{G}\} = \tbtci(M_1) \setminus \precond{G}.$$
  Finally, $\{s \mid s \in \postcond{t} \wedge \circled{t} \in M_2\} =
  \{s \mid s \in \postcond{t} \wedge t \in G\} = \postcond{G}$.

  Thus, applying Definitions~\ref{df-tciimplobjects}
  and~\ref{df-steps}, $\tbtci(M_2) = (\tbtci(M_1) \setminus
  \precond{G}) \cup  \postcond{G} = M'_2$.
\end{proof}

\begin{definition}{abstraction}{For $N$ a net and $i$ and action, let
  $N/i$ be the net obtained by renaming all occurrences of $i$ into $\tau$.}
\end{definition}

\begin{proposition}{pr-compositionality}
If $N \approx_\mathscr{R} N'$ then $N/i \approx_\mathscr{R} N'/i$.
\end{proposition}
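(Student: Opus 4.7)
The plan is to show that $\readyset(N/i)$ is completely determined by $\readyset(N)$, so that applying the same determination to both sides of the hypothesis $\readyset(N) = \readyset(N')$ yields the conclusion. The key observation is that $N$ and $N/i$ share the same places, transitions, flow relation and initial marking, and differ only in that $\ell_{N/i}(t) = \tau$ whenever $\ell_N(t) \in \{i,\tau\}$, with $\ell_{N/i}(t) = \ell_N(t)$ otherwise. In particular, for any marking $M$ and any $G \subseteq T$, $M \,[G\rangle_{N/i}$ iff $M \,[G\rangle_N$, so the underlying step relation is shared.

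The characterisation I would establish is the following: $\rpair{\sigma, X} \inp \readyset(N/i)$ iff there exists $\sigma' \inp \Act^*$ with $\sigma' \restrictedto (\Act\setminus\{i\}) = \sigma$ such that $\rpair{\sigma', X} \inp \readyset(N)$ and $\{i\} \notinp X$. For the ``only if'' direction, suppose $\rpair{\sigma, X} \inp \readyset(N/i)$ is witnessed by a marking $M$. Unfolding $M_0 \Production{\sigma}_{N/i} M$, each intermediate $\production{\tau}_{N/i}$ step arises from firing a single transition of $N$ with label $\tau$ or $i$; regrouping the $i$-firings as $N$-visible actions yields a $\sigma' \inp \Act^*$ with $\sigma' \restrictedto (\Act\setminus\{i\}) = \sigma$ and $M_0 \Production{\sigma'}_N M$. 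From $M \arrownot\production{\tau}_{N/i}$ one reads off both $M \arrownot\production{\tau}_N$ and $M \arrownot\production{\{i\}}_N$, so with $Y := \{A \mid M \production{A}_N\}$ we have $\rpair{\sigma', Y} \inp \readyset(N)$. It remains to verify $Y = X$: inspection of \refdf{steps} shows that any nonempty sub-multiset of an enabled step is itself an enabled step, so $\{i\} \notinp Y$ forces every $A \inp Y$ to satisfy $i \notinp A$; and for such $A$ the identification of steps between $N$ and $N/i$ gives $M \production{A}_N \equivalent M \production{A}_{N/i}$. Hence $X = \{A \inp Y \mid i \notinp A\} = Y$.

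The ``if'' direction is the symmetric simulation, using the same marking $M$ as witness: relabelling the $i$'s in $\sigma'$ to $\tau$'s converts $M_0 \Production{\sigma'}_N M$ into $M_0 \Production{\sigma}_{N/i} M$; the conjunction of $M \arrownot\production{\tau}_N$ and $\{i\} \notinp X$ yields $M \arrownot\production{\tau}_{N/i}$; and the sub-multiset argument above identifies the set of $N/i$-enabled step-labels at $M$ with $X$.

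With the characterisation in hand, the proposition is immediate: from $\readyset(N) = \readyset(N')$, any $\rpair{\sigma, X} \inp \readyset(N/i)$ lifts via ``only if'' to some $\rpair{\sigma', X} \inp \readyset(N) = \readyset(N')$, then descends via ``if'' to $\rpair{\sigma, X} \inp \readyset(N'/i)$, and symmetrically. The only point that deserves explicit care is the sub-multiset-closure of enabled steps, which underpins the equality $X = Y$; this is immediate from the two enabledness conditions of \refdf{steps}, and I do not foresee any further obstacle.
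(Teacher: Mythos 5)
Your proof is correct and follows essentially the same route as the paper: the paper's entire proof is the one-line statement of exactly the characterisation you establish, namely that $\rpair{\sigma,X}$ is a step ready pair of $N/i$ iff $N$ has a step ready pair $\rpair{\rho,X}$ with $\sigma$ obtained from $\rho$ by deleting all $i$'s and $\{i\}\notinp X$. You merely spell out the details (the shared step relation, the regrouping of $i$-firings, and the sub-multiset-closure of enabled steps guaranteeing $X=Y$) that the paper leaves implicit.
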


\begin{proof}
$\rpair{\sigma,X}$ is a step ready pair of $N/i$ iff $N$ has a step
  ready pair $\rpair{\rho,X}$, where the sequence $\sigma$ can be
  obtained from $\rho$ by deleting all $i$'s, and $\{i\}\notinp X$.
\end{proof}

\begin{theorem}{tci}
Any net is step readiness equivalent to its
transition-controlled-choice implementation.
\end{theorem}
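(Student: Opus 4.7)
My plan is to reduce the theorem to the $\tau$-free case already treated in the paper. Together, \reflem{branching-bisimulation} and \reflem{tciimplobjectsok} establish that for a net $N$ containing no $\tau$-labelled transitions, the functions and predicate of \refdf{tciimplobjects} witness clauses~(1)--(5) of \reflem{branching-bisimulation}, and hence $N \approx_\mathscr{R} I(N)$, where I write $I(N)$ for the transition-controlled-choice implementation of \refdf{transctrlchoice}. So I only need to lift this to arbitrary nets.

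First I would record a structural observation about the construction in \refdf{transctrlchoice}: the sets $S^\tau$, $T^\tau$, the flow relation $F'$ and the equivalence classes $[t]$ (defined through the enabled-conflict relation, which itself refers only to the net's markings and firings) depend solely on $(S,T,F,M_0)$, not on the labelling $\ell$. The labelling $\ell'$ of $I(N)$ merely extends $\ell$ on the inherited transitions and assigns $\tau$ to every newly introduced transition. Consequently $I$ commutes with any relabelling that leaves the underlying net unchanged and does not touch the freshly introduced $\tau$-transitions.

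Next, given an arbitrary net $N=(S,T,F,M_0,\ell)$, I would pick a fresh visible action $i \notin \Act$ and form $N^i$ by relabelling every $\tau$-transition of $N$ to carry the label $i$, keeping all other labels. Then $N^i$ has no $\tau$-transitions, so the special case gives $N^i \approx_\mathscr{R} I(N^i)$. Applying \refpr{compositionality} to hide $i$ yields $N^i/i \approx_\mathscr{R} I(N^i)/i$. Two identities then finish the argument: $N^i/i = N$ because $i$ was fresh and the two renamings cancel; and $I(N^i)/i = I(N)$, because the underlying nets coincide by the structural observation, the transitions inherited from $N$ carry labels that traverse the round trip $\tau \mapsto i \mapsto \tau$, and the $\tau$-labels on the freshly added transitions in $T^\tau$ are untouched throughout. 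The substantive content of the theorem is already absorbed into \reflem{tciimplobjectsok}, so the only remaining work here is this routine commutation check, which is a direct inspection of the definitions and presents no real obstacle.
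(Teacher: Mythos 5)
Your proposal is correct and follows essentially the same route as the paper: both reduce to the $\tau$-free case handled by Lemmas~\ref{lem-branching-bisimulation} and~\ref{lem-tciimplobjectsok} by relabelling the $\tau$-transitions of the original net with a fresh action $i$, observing that the construction of \refdf{transctrlchoice} commutes with this relabelling, and then hiding $i$ again via \refpr{compositionality}. The only cosmetic difference is that you phrase the fresh action as $i\notin\Act$ rather than as an action of $\Act$ not occurring in the net, which is immaterial.
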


\begin{proof}
  Let $N'_\tau = (S, T, F', M_0, \ell'_\tau)$ be a net and $N_\tau = (S \cup S^\tau, T
  \cup T^\tau, F, M_0, \ell_\tau)$ its transition-controlled-choice implementation.
  Obtain $N'$ from $N'_\tau$ and $N$ from $N_\tau$ by changing all
  $\tau$-labels of transitions in $T$---but not those in
  $T^\tau$---into $i$. Thus $N=(S \cup S^\tau, T
  \cup T^\tau, F, M_0, \ell)$ where $\ell$ satisfies $\ell(t)=\tau$ if
  $t\inp T^\tau$; $\ell(t)=i$ if $t \inp T$ and $\ell_\tau(t)=\tau$; and
  $\ell(t)=\ell_\tau(t)$ otherwise. Then $N$ is still the
  transition-controlled-choice implementation of $N'$, and moreover
  $N'$ has no $\tau$-labels. Furthermore, $N'/i=N'_\tau$ and $N/i =N_\tau$.
  Lemmas~\ref{lem-branching-bisimulation} and~\ref{lem-tciimplobjectsok}
  yield $N \approx_\mathscr{R} N'$.
  So by \refpr{compositionality} we obtain $N/i \approx_\mathscr{R} N'/i$,
  which is $N_\tau \approx_\mathscr{R} N'_\tau$.
\end{proof}

\begin{thebibliography}{10}

\bibitem{aalst98beyond}
{ {W.M.P. van der} Aalst, E.~Kindler \& J.~Desel} (1998):
\newblock {\em Beyond asymmetric choice: A note on some extensions.}
\newblock {\sl Petri Net Newsletter} 55, pp. 3--13.

\bibitem{bes87}
{ E.~Best} (1987):
\newblock {\em {Structure theory of Petri nets: The free choice hiatus}.}
\newblock In W.~Brauer, W.~Reisig \& G.~Rozenberg, editors: {\sl Advances in Petri Nets 1986}, {\sl \rm LNCS} 254, Springer, pp. 168--206.

\bibitem{best83freesimple}
{ E.~Best \& M.W. Shields} (1983):
\newblock {\em Some equivalence results for free choice nets and simple nets and on the periodicity of live free choice nets.}
\newblock In G.~Ausiello \& M.~Protasi, editors: {\sl {\rm Proceedings 8th Colloquium on} Trees in Algebra and Programming (CAAP '83)}, {\sl \rm LNCS} 159, Springer, pp. 141--154.

\bibitem{boer91embedding}
{ F.S.~de Boer \& C.~Palamidessi} (1991):
\newblock {\em Embedding as a tool for language comparison: On the {CSP} hierarchy.}
\newblock In J.C.M. Baeten \& J.F. Groote, editors: {\sl {\rm Proceedings 2nd International Conference on} Concurrency Theory {\rm (CONCUR'91), Amsterdam, The Netherlands}}, {\sl \rm LNCS} 527, Springer, pp. 127--141.

\bibitem{bouge88symmetricleader}
{ L.~Boug\'{e}} (1988):
\newblock {\em On the existence of symmetric algorithms to find leaders in networks of communicating sequential processes.}
\newblock {\sl Acta Informatica} 25(2), pp. 179--201.

\bibitem{glabbeek08symmasymm}
{ R.J.~van Glabbeek, U.~Goltz \& J.-W. Schicke} (2008):
\newblock {\em Symmetric and asymmetric asynchronous interaction.}
\newblock Technical Report 2008-03, TU Braunschweig.
\newblock Extended abstract in Proceedings 1st {\sl Interaction and Concurrency Experience} (ICE'08) on {\sl Synchronous and Asynchronous Interactions in Concurrent Distributed Systems}, to appear in {\sl Electronic Notes in Theoretical Computer Science}, Elsevier.

\bibitem{GW96}
{ R.J.~van Glabbeek \& W.P. Weijland} (1996):
\newblock {\em Branching time and abstraction in bisimulation semantics.}
\newblock {\sl Journal of the ACM} 43(3), pp. 555--600.

\bibitem{G:FoSSaCS06}
{ D.~Gorla} (2006):
\newblock {\em On the relative expressive power of asynchronous communication primitives.}
\newblock In L.~Aceto \& A.~Ing{\'o}lfsd{\'o}ttir, editors: {\sl Proceedings 9th International Conference on Foundations of Software Science and Computation Structures (FoSSaCS '06)}, {\sl \rm LNCS} 3921, Springer, pp. 47--62.

\bibitem{hopkins91distnets}
{ R.P. Hopkins} (1991):
\newblock {\em Distributable nets.}
\newblock In {\sl Advances in Petri Nets 1991}, {\sl \rm LNCS} 524, Springer, pp. 161--187.

\bibitem{lamport78ordering}
{ L.~Lamport} (1978):
\newblock {\em Time, clocks, and the ordering of events in a distributed system.}
\newblock {\sl Communications of the ACM} 21(7), pp. 558--565.

\bibitem{lamport02arbitration}
{ L.~Lamport} (2003):
\newblock {\em Arbitration-free synchronization.}
\newblock {\sl Distributed Computing} 16(2-3), pp. 219--237.

\bibitem{lynch96}
{ N.~Lynch} (1996):
\newblock {\em Distributed Algorithms}.
\newblock Morgan Kaufmann Publishers.

\bibitem{nestmann00what}
{ U.~Nestmann} (2000):
\newblock {\em What is a `good' encoding of guarded choice?}
\newblock {\sl Information and Computation} 156, pp. 287--319.

\bibitem{OH86}
{ E.-R. Olderog \& C.A.R. Hoare} (1986):
\newblock {\em Specification-oriented semantics for communicating processes.}
\newblock {\sl Acta Informatica} 23, pp. 9--66.

\bibitem{palamidessi97comparing}
{ C.~Palamidessi} (1997):
\newblock {\em Comparing the expressive power of the synchronous and the asynchronous pi-calculus.}
\newblock In {\sl {\rm Conference Record of the 24th ACM SIGPLAN-SIGACT Symposium on} Principles of Programming Languages (POPL '97)}, ACM Press, pp. 256--265.

\bibitem{reisig82buffersync}
{ W.~Reisig} (1982):
\newblock {\em Deterministic buffer synchronization of sequential processes.}
\newblock {\sl Acta Informatica} 18, pp. 115--134.

\bibitem{selinger97firstorder}
{ Peter Selinger} (1997):
\newblock {\em First-order axioms for asynchrony.}
\newblock In {\sl {\rm Proceedings 8th International Conference on} Concurrency Theory {\rm (CONCUR'97), Warsaw, Poland}}, {\sl LNCS} 1243, Springer, pp. 376--390.

\bibitem{taubner88zurverteiltenimpl}
{ Dirk Taubner} (1988):
\newblock {\em Zur verteilten {I}mplementierung von {P}etrinetzen.}
\newblock {\sl Informationstechnik} 30(5), pp. 357--370.
\newblock Technical report, TUM-I 8805, TU {M\"unchen}.

\end{thebibliography}
\end{document}